\theoremstyle{plain}
\newtheorem{theorem}{Theorem}
\newtheorem{proposition}[theorem]{Proposition}
\newtheorem{corollary}[theorem]{Corollary}
\newtheorem{remark}[theorem]{Remark}
\newcommand\od[2]{\frac{d#1}{d#2}}
\newcommand\pd[2]{\frac{\partial #1}{\partial #2}}
\newcommand\tpd[2]{\partial #1/\partial #2}
\newcommand\parentheses[1]{\!\left(#1\right)}
\newcommand\norm[1]{\left\|#1\right\|}
\newcommand\abs[1]{\left|#1\right|}
\newcommand\DS{\displaystyle}
\newcommand\R{\mathbb{R}}
\newcommand\defeq{\mathrel{\mathop:}=}
\newcommand\eps{\varepsilon}
\newenvironment{tbmatrix}{\left[\begin{smallmatrix}}{\end{smallmatrix}\right]}
\newcommand\bd{\mathbf{d}}
\newcommand\PB[2]{\left\{#1,#2\right\}}
\newcommand{\SO}{\mathsf{SO}}
\newcommand{\br}{\bm{r}}
\newcommand{\bp}{\bm{p}}
\newcommand{\bz}{\bm{z}}
\newcommand{\ez}{\bm{e}_{z}}
\newcommand\dt{\Delta t}
\journal{Journal of Computational Physics}
\begin{document}

\begin{frontmatter}



\title{Explicit Symplectic Integrators for Massive Point Vortex Dynamics in Binary Mixture of Bose--Einstein Condensates}


\author{Tomoki Ohsawa} 

\affiliation{organization={Department of Mathematical Sciences, The University of Texas at Dallas},
            addressline={800 W Campbell Rd}, 
            city={Richardson},
            postcode={75080-3021},
            state={Texas},
            country={United States}}

\begin{abstract}
We construct explicit integrators of arbitrary even orders of accuracy for massive point vortex dynamics in binary mixture of Bose--Einstein condensates proposed by Richaud et al. The integrators are symplectic and preserve the angular momentum of the system exactly. Our main focus is the small-mass regime in which the minor component of the binary mixture comprises a very small fraction of the total mass. The solution behaviors in this regime change significantly depending on the initial momenta: they are highly oscillatory unless the momenta satisfy certain conditions. The standard Runge--Kutta method performs very poorly in preserving the Hamiltonian showing a significant drift in the long run, especially for highly oscillatory solutions. On the other hand, our integrators nearly preserve the Hamiltonian without drifts. We also give an estimate of the error in the Hamiltonian by finding an asymptotic expansion of the modified Hamiltonian for our second-order integrator.
\end{abstract}


\begin{keyword}
Massive point vortices \sep Bose--Einstein condensates \sep symplectic integrator

\PACS 03.75.Lm \sep 03.75.Kk \sep 02.60.Lj

\MSC[2020] 65P10 \sep 37M15 \sep 70H05 \sep 81V73
\end{keyword}

\end{frontmatter}

\section{Massive Point Vortex Dynamics}
\subsection{Massive Point Vortices in Two-Component BEC}
The main focus of this paper is to numerically solve the equations of motion for $N$ massive point vortices in a pancake-shaped Bose--Einstein condensate (BEC) of topological charges $\{ q_{j} = \pm 1 \}_{j=1}^{N}$ located at $\{ \br_{j} \defeq (x_{j}, y_{j}) \in \R^{2} \}_{j=1}^{N}$.

We note in passing that the term ``massive'' is not in the sense that it is great in size or number of vortices but rather that the vortex behaves as if it possesses (usually small) mass unlike the more conventional point-vortex model that assumes that the vortex possesses no mass.
In other words, massive vortices are described by first-order differential equations in the phase space (position--momentum space), whereas massless vortices are described by first-order equations in the configuration space (position space).

We set $r_{j} \defeq |\br_{j}| = (\br_{j} \cdot \br_{j})^{1/2}$ to be the length of $\br_{j}$, and also use shorthands $\br \defeq (\br_{1}, \dots, \br_{N})$ and similarly for other vectors.
We also set $\ez \defeq (0,0,1)$ and note that, for every pair of $\bm{a}, \bm{b} \in \R^{2}$, the cross product $\bm{a} \times \bm{b}$ are taken by attaching zero as the third components to both, and we see the result as a vector in $\R^{2}$ or $\R^{3}$ depending on the context.

The Lagrangian (in the non-dimensional form) for the massive $N$ vortices in a binary mixture of BEC with components $a$ and $b$ is given by (see \cite{PhysRevA.101.013630})
\begin{equation}
  \label{eq:L}
  L(\br,\dot{\br}) \defeq \sum_{j=1}^{N} \parentheses{
    \frac{\eps}{2} \dot{\br}_{j}^{2}
    + q_{j} (\dot{\br}_{j} \times \br_{j}) \cdot \ez
  }
  - E(\br),
\end{equation}
where the parameter $\eps$ is defined as
\begin{equation}
  \label{eq:eps}
  \eps \defeq \frac{M_{b}/M_{a}}{N},
\end{equation}
where $M_{s}$ with $s = a, b$ is the total mass of the components/species $s$; the potential term $E$ is given by
\begin{equation}
  \label{eq:E}
  E(\br)
  \defeq \sum_{j=1}^{N} \ln(1 - r_{j}^{2})
  + \sum_{1\le j < k \le N} q_{j} q_{k} \ln\parentheses{ \frac{1 - 2 \br_{j} \cdot \br_{k} + r_{j}^{2} r_{k}^{2}}{|\br_{j} - \br_{k}|^{2}} },
\end{equation}
where the first term comes from a confinement to the unit disc on the plane and the second term from interactions of the $N$ vortices.
The Euler--Lagrange equation then gives
\begin{equation}
  \label{eq:Euler-Lagrange}
  \eps\,\ddot{\br}_{j} + 2 q_{j} J \dot{\br}_{j} = - \nabla_{j}E(\br),
\end{equation}
where $\nabla_{j} \defeq \tpd{}{\br_{j}}$.

The Lagrangian~\eqref{eq:L} was derived by \cite{PhysRevA.101.013630} by a variational approximation of a two-component Gross--Pitaevskii (GP) equations for a binary mixture of BECs.
This was motivated by their earlier work~\cite{PhysRevA.103.023311} using a coupled GP equations for such a binary mixture in the immiscible regime.
Specifically, solutions of the coupled GP equations show that the majority component exhibits vortices, and the atoms of the minority component are trapped inside the vortices.
This results in equipping the vortices with masses, in contrast to the standard quantum vortices~\cite{Fe1955,On1949} that are usually considered to be massless, and is often approximated by the Kirchhoff equations (see \eqref{eq:Kirchhoff} below).

The variational approximation in \cite{PhysRevA.101.013630} assumes, for the major (massless) $a$-species, the ansatz in the form of the trial wave function from \cite{PhysRevA.70.043624} for $N$ vortices located at $\{ \br_{j} \}_{j=1}^{N}$, whereas it assumes, for the minor $b$-species, a linear combination of Gaussians from \cite{PeMiCiLeZo1996} centered at $\{ \br_{j} \}_{j=1}^{N}$ as well.

We are particularly interested in the regime where $\eps \ll 1$, that is, the $b$-species comprise a small mass compared to the $a$-species, but its presence is not negligible.
One sees that then \eqref{eq:Euler-Lagrange} is a singularly perturbed system.

\subsection{Hamiltonian Formulation}
Using the Lagrangian~\eqref{eq:L}, the Legendre transformation is defined via the momenta $\bp \defeq (\bp_{1}, \dots, \bp_{N})$ with
\begin{equation}
  \label{eq:p}
  \bp_{j} \defeq \pd{L}{\dot{\br}_{j}}
  = \eps\,\dot{\br}_{j} + q_{j}(\br_{j} \times \ez)
  = \eps\,\dot{\br}_{j} + q_{j} J \br_{j}, 
\end{equation}
where we set
\begin{equation}
  \label{eq:J}
  J \defeq
  \begin{bmatrix}
    0 & 1 \\
    -1 & 0
  \end{bmatrix}
  \text{ so that }
  J \bm{a} = \bm{a} \times \ez
  \quad
  \forall \bm{a} \in \R^{2}.
\end{equation}
Hence we have $\dot{\br}_{j} = \frac{1}{\eps} ( \bp_{j} - q_{j} J \br_{j} )$, and so have the Hamiltonian
\begin{align}
  \label{eq:H}
  H(\br,\bp)
  &\defeq \sum_{j=1}^{N} \bp_{j} \cdot \dot{\br}_{j} - L(\br,\dot{\br}) \nonumber\\
  &= \frac{1}{2\eps} \sum_{j=1}^{N} \parentheses{ \bp_{j} - q_{j} J \br_{j} }^{2} + E(\br). 
\end{align}

Notice that the Hamiltonian is not separable, i.e., $H(\br,\bp) \neq T(\bp) + V(\br)$ with some functions $T$ and $V$.
It is well known that there is no \textit{explicit} symplectic integrator for general non-separable Hamiltonian systems~\cite{SaCa2018,LeRe2004,HaLuWa2006}.

Let us set
\begin{equation*}
  \bz_{j} =
  \begin{bmatrix}
    \br_{j} \\
    \bp_{j}
  \end{bmatrix},
  \quad
  \bz = (\bz_{1}, \dots, \bz_{N}),
  \quad
  \mathbb{J}_{n} \defeq
  \begin{bmatrix}
    0 & I_{n} \\
    -I_{n} & 0
  \end{bmatrix},
\end{equation*}
where $I_{n}$ is the $n \times n$ identity matrix, and consider Hamilton's equations
\begin{equation}
  \label{eq:Hamilton0}
  \dot{\bz} = \mathbb{J}_{2N}\, \nabla H(\bz)
  \iff
  \left\{
  \begin{array}{l}
    \DS\dot{\br}_{j} = \pd{H}{\bp_{j}},
    \medskip\\
    \DS\dot{\bp}_{j} = -\pd{H}{\br_{j}},
  \end{array}
  \right.
\end{equation}
where $j = 1, \dots, N$, or more concretely,
\begin{equation}
  \label{eq:Hamilton}
  \begin{array}{l}
  \DS\dot{\br}_{j} = \frac{1}{\eps}(- q_{j} J \br_{j} + \bp_{j}),
    \medskip\\
  \DS\dot{\bp}_{j} = \frac{1}{\eps}(-\br_{j} - q_{j} J \bp_{j}) - \nabla_{j}E(\br),
  \end{array}
\end{equation}
noting that $q_{j} = \pm 1$.

\subsection{Symplecticity and Noether Invariant}
Since each vortex is constrained to the open unit disk
\begin{equation*}
  \mathcal{D} \defeq \Set{ \bm{x} \in \R^{2} |\, |\bm{x}| < 1 },
\end{equation*}
the phase space for the Hamiltonian system~\eqref{eq:Hamilton} is 
\begin{equation}
  \mathcal{P} \defeq \Set{ \bz = (\br,\bp) \in \mathcal{D}^{N} \times \R^{2N} | \bp \in \R^{2N} },
\end{equation}
which is equipped with the standard symplectic form
\begin{equation}
  \label{eq:Omega}
  \Omega \defeq \bd\br_{j} \wedge \bd\bp_{j} = \bd{x}_{j} \wedge \bd{\xi_{j}} + \bd{y}_{j} \wedge \bd{\eta_{j}},
\end{equation}
where $\bd$ stands for the exterior derivative, $\bp_{j} = (\xi_{j}, \eta_{j})$, and the summation convention is assumed on $j$.

Let $\Phi_{t}$ be the flow of \eqref{eq:Hamilton}, i.e., for every $t \in \R$ for which the solution $\bz(t) = (\br(t), \bp(t))$ exists with initial point $\bz(0) = (\br(0), \bp(0))$,
\begin{equation*}
  \Phi_{t}(\bz(0)) = \bz(t).
\end{equation*}
Then $\Phi_{t}$ is symplectic, i.e.,
\begin{equation*}
  \Phi_{t}^{*} \Omega = \Omega
  \iff
  D\Phi_{t}(\bz)^{T} \mathbb{J}_{2N} D\Phi_{t}(\bz) = \mathbb{J}_{2N},
\end{equation*}
where $D\Phi_{t}$ stands for the Jacobian matrix of $\Phi_{t}$ with respect to the variables $\bz = (\br,\bp)$.

One observes that the Hamiltonian~\eqref{eq:H} possesses the (planar) rotational symmetry: 
\begin{multline}
  \label{eq:SO2-sym}
  H\parentheses{ R\br_{1}, \dots, R\br_{N}, R\bp_{1}, \dots, R\bp_{N} }
  \\
  = H(\br_{1}, \dots, \br_{N}, \bp_{1}, \dots, \bp_{N})
  \quad\forall R \in \SO(2).
\end{multline}
As a result, the total angular momentum
\begin{equation}
  \label{eq:ell}
  \ell(\br,\bp) \defeq \sum_{j=1}^{N} (\br_{j} \times \bp_{j}) \cdot \ez
\end{equation}
gives the corresponding Noether invariant, and is conserved by \eqref{eq:Hamilton}.

\subsection{Oscillatory Solutions and Separation of Scales}
The solutions of \eqref{eq:Hamilton} tend to be highly oscillatory when $\eps \ll 1$.
It was also found in our recent work~\cite{OhRiGo-small_mass1} that the initial point $(\br(0),\bp(0))$ may affect the oscillatory nature of the solution.
Specifically, consider the subset
\begin{equation*}
  \mathcal{K} \defeq \Set{ \bz = (\br,\bp) \in \mathcal{P} | \bp_{j} = q_{j} J \br_{j} \text{ for } 1\le j \le N  }.
\end{equation*}
Notice that the Hamiltonian $H(\br,\bp)$ (see \eqref{eq:H}) of the massive dynamics restricted to $\mathcal{K}$ gives $E(\br)$, but then this is the Hamiltonian for the massless dynamics or the Kirchhoff equations:
\begin{equation}
  \label{eq:Kirchhoff}
  2q_{j} \dot{x}_{j} = \pd{E}{y_{j}},
  \qquad
  2q_{j} \dot{y}_{j} = -\pd{E}{x_{j}},
\end{equation}
which follows from the Euler--Lagrange equation~\eqref{eq:Euler-Lagrange} by taking the limit $\eps \to 0$.

It was proved in \cite{OhRiGo-small_mass1} that the massive dynamics---solutions of \eqref{eq:Hamilton}---with $\bz(0) \in \mathcal{K}$ stays $O(\eps)$-close to $\mathcal{K}$ for short times.
It was also observed numerically in \cite{OhRiGo-small_mass1} that the massive dynamics with $\bz(0) \notin \mathcal{K}$ exhibits fast oscillations with characteristic time of scale $O(\eps)$, whereas if $\bz(0) \in \mathcal{K}$ then such oscillations subside and the massive dynamics behaves like the massless dynamics~\eqref{eq:Kirchhoff} with characteristic time of scale $O(1)$.
Indeed, one can also prove that the massive dynamics with $\bz(0) \in \mathcal{K}$ stays $O(\eps)$-close to the corresponding massless dynamics---solutions of \eqref{eq:Kirchhoff} with same initial positions $\br(0)$.
Hence $\mathcal{K}$ is called the \textit{kinematic subspace} in \cite{OhRiGo-small_mass1} in the sense that this is a domain in the phase space $\mathcal{P}$ where the massive dynamics effectively loses its mass/inertia and hence the dynamics becomes more massless/kinematic.

Intuitively, the highly oscillatory behaviors come from the kinetic-energy (or inertia) terms in $H(\br,\bp)$ that are proportional to $1/\eps$.
These terms vanish on $\mathcal{K}$ and hence the fast (oscillatory) dynamics becomes less prominent near $\mathcal{K}$; as a result, the dynamics is dominated by the slow dynamics~\eqref{eq:Kirchhoff} driven by $E(\br)$.

Such a separation of scales in ordinary differential equations (ODEs) poses a stiff problem---a class of ODEs that are challenging to solve numerically because of a disparity in the time scales of the rapid transient behaviors and the slower global behaviors~\cite{HaNoWa1993b}.

\subsection{Illustrative Example: Single Massive Vortex}
\label{ssec:single_vortex}
In order to illustrate the characteristics of the system~\eqref{eq:Hamilton} described above, let us consider a simple example of a single massive vortex ($N = 1$) with charge $q_{1} = 1$ and $\eps = 0.01$.
Note that, although the interaction terms in $E$ are absent, there is still the confinement term---the first term on the right-hand side in \eqref{eq:E}---in this system.
As a result, one expects to observe fast oscillations and the separation of time scales described above.

Let us consider the initial condition
\begin{equation}
  \label{eq:singlevort_IC_in_K}
  \br(0) =
  \begin{bmatrix}
    0.5 \\
    0.3
  \end{bmatrix}
  \qquad
  \bp(0) =
  \begin{bmatrix}
    0.3 \\
    -0.5
  \end{bmatrix}
\end{equation}
that satisfies $(\br(0), \bp(0)) \in \mathcal{K}$, as well as the one with the same $\br(0)$ from above but with the second component of $\bp(0)$ from above reversed:
\begin{equation}
  \label{eq:singlevort_IC_off_K}
  \br(0) =
  \begin{bmatrix}
    0.5 \\
    0.3
  \end{bmatrix}
  \qquad
  \bp(0) =
  \begin{bmatrix}
    0.3 \\
    0.5
  \end{bmatrix},
\end{equation}
for which $(\br(0), \bp(0)) \notin \mathcal{K}$.

\begin{figure}[hbtp]
  \centering
    \begin{subcaptionblock}{.49\linewidth}
    \centering
    \includegraphics[width=\linewidth]{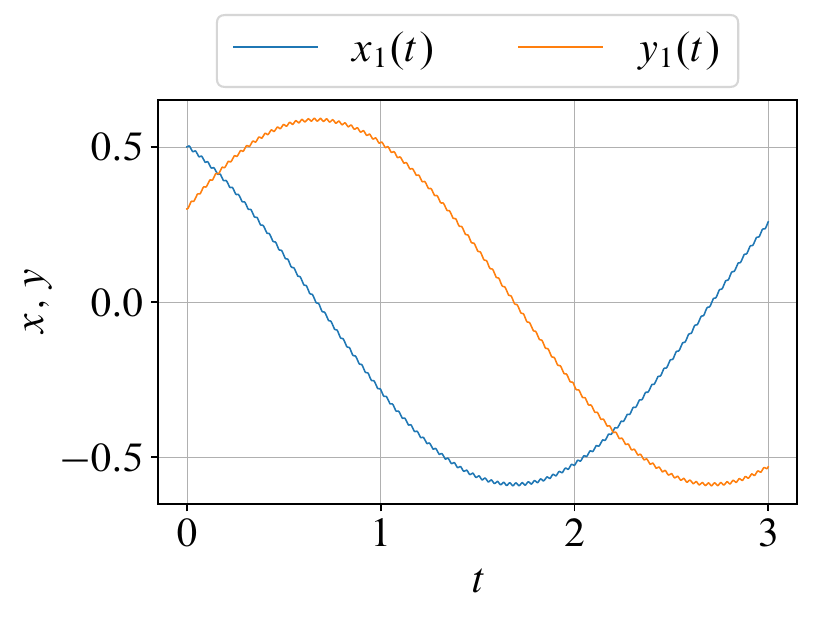}
    \caption{Initial condition~\eqref{eq:singlevort_IC_in_K} with $(\br(0), \bp(0)) \in \mathcal{K}$}
    \label{fig:t-xy_inK_001}
  \end{subcaptionblock}
  \begin{subcaptionblock}{.49\linewidth}
    \centering
    \includegraphics[width=\linewidth]{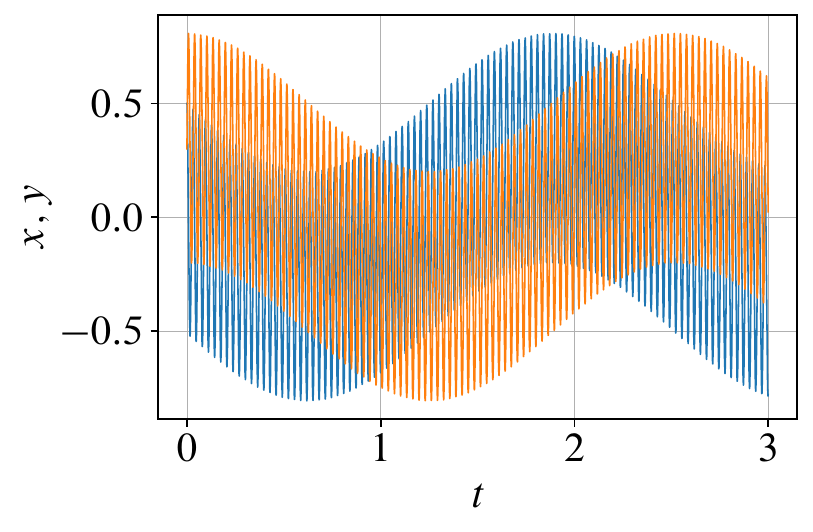}
    \caption{Initial condition~\eqref{eq:singlevort_IC_off_K} with $(\br(0), \bp(0)) \notin \mathcal{K}$}
    \label{fig:t-xy_offK_001}
  \end{subcaptionblock}
  \caption{Time evolution of $(x,y)$-coordinates of single massive vortex: $N = 1$, $q_{1} = 1$, $\eps = 0.01$.}
  \label{fig:t-xy_001} 
\end{figure}

\Cref{fig:t-xy_001} shows the time evolution of the $(x,y)$-coordinates of the single massive vortex, computed by the 4th-order symplectic method we shall construct below; see \eqref{eq:Strang} and \eqref{eq:nth-order-tripleJump} below with $n = 4$.
The solution with $(\br(0), \bp(0)) \in \mathcal{K}$ exhibits only small fluctuations that are barely visible on the plot in panel~\subref{fig:t-xy_inK_001}, and seems to be dominated by the slow dynamics.
On the other hand, the solution with $(\br(0), \bp(0)) \notin \mathcal{K}$ in panel~\subref{fig:t-xy_offK_001} shows a much more prominent combination of fast and slow dynamics.

\section{Splitting Integrators for Massive Point Vortex Dynamics}
We would like to solve \eqref{eq:Hamilton} numerically, with a particular focus on long-time (near-)preservation of both the Hamiltonian~\eqref{eq:H} and the angular momentum~\eqref{eq:ell}.
As mentioned above, there is no explicit symplectic integrator for general non-separable Hamiltonian systems, although there are such integrators for specific classes of non-separable Hamiltonian systems~\cite{StLa2000,Bl2002,WuFoRo2003,McQu2004,Ch2009,Ta2016a,WaSuLiWu2021a,WaSuLiWu2021b,WaSuLiWu2021c,WuWaSuLi2021,WuWaSuLiHa2022}.
There are explicit integrators for an extended Hamiltonian system defined by doubling the dimension of a general non-separable one~\cite{Pi2015,Ta2016b}.
However, they are symplectic only in the extended phase space, and need to be corrected with an implicit projection to be rendered symplectic in the original phase space~\cite{JaOh2023,Oh2023b}.

\subsection{Splitting the Hamiltonian}
Our integrators are based on the following splitting of the Hamiltonian:
\begin{equation*}
 H(\br,\bp) = \frac{1}{\eps}H_{A}(\br,\bp) + H_{B}(\br,\bp)
\end{equation*}
with
\begin{subequations}
  \begin{align}
    \label{eq:H_A}
    H_{A}(\br,\bp) &\defeq \frac{1}{2} \sum_{j=1}^{N} ( \bp_{j} - q_{j} J \br_{j} )^{2},
    \\
    \label{eq:H_B}
    H_{B}(\br,\bp) &\defeq E(\br).
  \end{align}
\end{subequations}

The Hamiltonian system corresponding to $\frac{1}{\eps}H_{A}$ is then the linear system
\begin{subequations}
  \begin{equation}
    \label{eq:HamiltonA}
    \dot{\bz} = \frac{1}{\eps}\, \mathbb{J}_{2N} \nabla H_{A}(\bz)
    \iff
    \begin{bmatrix}
      \dot{\br}_{j} \smallskip\\
      \dot{\bp}_{j}
    \end{bmatrix}
    = \frac{1}{\eps}
    \begin{bmatrix}
      -q_{j} J & I \smallskip\\
      -I & -q_{j} J
    \end{bmatrix}
    \begin{bmatrix}
      \br_{j} \smallskip\\
      \bp_{j}
    \end{bmatrix},
  \end{equation}
  whereas the one with $H_{B}$ is
  \begin{equation}
    \label{eq:HamiltonB}
    \dot{\bz} = \mathbb{J}_{2N} \nabla H_{B}(\bz)
    \iff
    \left\{
      \begin{array}{l}
        \dot{\br}_{j} = 0, \smallskip\\
        \dot{\bp}_{j} = -\nabla_{j}E(\br).
      \end{array}
    \right.
  \end{equation}
\end{subequations}

\subsection{Exact Solutions of Split Systems}
Our splitting scheme to be described below is particularly simple because both systems \eqref{eq:HamiltonA} and \eqref{eq:HamiltonB} are exactly solvable.

Let us first solve \eqref{eq:HamiltonA}.
First notice that one can write the matrix on the right-hand side as the sum of two commuting matrices
\begin{equation*}
  \begin{bmatrix}
    -q_{j} J & I \\
    -I & -q_{j} J
  \end{bmatrix} =
  \begin{bmatrix}
    -q_{j} J & 0 \\
    0 & -q_{j} J
  \end{bmatrix}
  +
  \begin{bmatrix}
    0 & I \\
    -I & 0
  \end{bmatrix}.
\end{equation*}
Thus we have
\begin{align*}
  &\exp\parentheses{
  \frac{t}{\eps}
  \begin{bmatrix}
    -q_{j} J & I \\
    -I & -q_{j} J
  \end{bmatrix}
  } \\
  &= \exp\parentheses{
    \frac{t}{\eps}
    \begin{bmatrix}
      -q_{j} J & 0 \\
      0 & -q_{j} J
    \end{bmatrix}
    }
    \exp\parentheses{
    \frac{t}{\eps}
    \begin{bmatrix}
      0 & I \\
      -I & 0
    \end{bmatrix}
    } \\
             &=
               \begin{bmatrix}
                 R(q_{j}t/\eps) & 0 \\
                 0 & R(q_{j}t/\eps)
               \end{bmatrix}
               \begin{bmatrix}
                 \cos(t/\eps) I & \sin(t/\eps) I \\
                 -\sin(t/\eps) I & \cos(t/\eps) I
               \end{bmatrix} \\
             &= \begin{bmatrix}
                 \cos(t/\eps)\, R(q_{j}t/\eps)  & \sin(t/\eps)\, R(q_{j}t/\eps) \smallskip\\
                 -\sin(t/\eps)\, R(q_{j}t/\eps) & \cos(t/\eps)\, R(q_{j}t/\eps)
               \end{bmatrix},
\end{align*}
where $R(\theta) \defeq
  \begin{tbmatrix}
    \cos \theta & -\sin\theta \\
    \sin \theta & \cos\theta
  \end{tbmatrix}$.
Therefore, we may write the flow $\Phi^{A}_{t}$ of the Hamiltonian system~\eqref{eq:HamiltonA} as follows:
\begin{subequations}
  \begin{equation}
    \label{eq:PhiA}
    \Phi^{A}_{t}(\br,\bp)_{j} =
    \begin{bmatrix}
      \cos\parentheses{\frac{t}{\eps}} R\parentheses{ \frac{q_{j}t}{\eps} } \br_{j} + \sin\parentheses{\frac{t}{\eps}} R\parentheses{ \frac{q_{j}t}{\eps} } \bp_{j} \smallskip\\
      -\sin\parentheses{\frac{t}{\eps}} R\parentheses{ \frac{q_{j}t}{\eps} } \br_{j} + \cos\parentheses{\frac{t}{\eps}} R\parentheses{ \frac{q_{j}t}{\eps} } \bp_{j}
    \end{bmatrix},
  \end{equation}
  where we wrote only the $j$-th component ($1 \le j \le N$) for brevity.

  On the other hand, one easily obtains the flow $\Phi^{B}_{t}$ of the Hamiltonian system~\eqref{eq:HamiltonB} as follows:
  \begin{equation}
    \label{eq:PhiB}
    \Phi^{B}_{t}(\br,\bp)_{j} =
    \begin{bmatrix}
      \br_{j} \smallskip\\
      \bp_{j} - t\,\nabla_{j}E(\br)
    \end{bmatrix},
  \end{equation}
\end{subequations}
again showing only the $j$-th component.

Notice that the $A$-flow $\Phi^{A}_{t}$ exhibits oscillations with period $2\pi\eps$, whereas the characteristic time scale of the $B$-flow $\Phi^{B}_{t}$ is determined by $\nabla_{j} E(\br)$; it is $O(1)$ as long as the vortices do not get too close to each other.
Therefore, our splitting can be interpreted as a splitting of the dynamics of the system~\eqref{eq:Hamilton} into the fast oscillatory dynamics of the $A$-flow and the slow dynamics of the $B$-flow.

\subsection{Symplectic Integrators}
Our base method is the 2nd-order explicit integrator by the Strang splitting~\cite{St1968}:
\begin{equation}
  \label{eq:Strang}
  \Phi^{(2)}_{\dt} \defeq \Phi^{A}_{\dt/2} \circ \Phi^{B}_{\dt} \circ \Phi^{A}_{\dt/2}
\end{equation}
with time step $\dt$.
We shall refer to this method as \textsf{Split2}.

The following fundamental properties of $\Phi^{(2)}$ then follow easily from the definition:
\begin{proposition}
  \label[proposition]{prop:fundamental_properties}
  The 2nd-order integrator $\Phi^{(2)}$ defined in \eqref{eq:Strang} is symplectic and preserves the total angular momentum $\ell$ (see \eqref{eq:ell}) exactly.
\end{proposition}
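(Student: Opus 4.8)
The plan is to prove the two claims separately, in each case reducing the property of $\Phi^{(2)}_{\dt}$ to the corresponding property of the exact split flows $\Phi^{A}_{t}$ and $\Phi^{B}_{t}$, and then invoking the fact that the relevant structures (the symplectic form, the level sets of $\ell$) are preserved under composition.

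For symplecticity I would first observe that $\frac{1}{\eps}H_{A}$ and $H_{B}$ are smooth Hamiltonians, so their flows $\Phi^{A}_{t}$ and $\Phi^{B}_{t}$ — written explicitly in \eqref{eq:PhiA} and \eqref{eq:PhiB} — are Hamiltonian flows of genuine Hamiltonian vector fields and are therefore symplectic with respect to $\Omega$, exactly as recalled for $\Phi_{t}$ in the excerpt. (If one prefers a direct check: the Jacobian of $\Phi^{A}_{t}$ is the block matrix displayed just above \eqref{eq:PhiA}, and a short computation gives $D^{T}\mathbb{J}_{2N}D=\mathbb{J}_{2N}$ because each block $R(\cdot)\in\SO(2)$; the Jacobian of $\Phi^{B}_{t}$ is block lower-triangular with identity diagonal blocks and symmetric off-diagonal block equal to $-t$ times the Hessian of $E$, which is again symplectic.) Then $\Phi^{(2)}_{\dt}=\Phi^{A}_{\dt/2}\circ\Phi^{B}_{\dt}\circ\Phi^{A}_{\dt/2}$ is a composition of symplectic maps, hence symplectic, since $(\Psi_{1}\circ\Psi_{2})^{*}\Omega=\Psi_{2}^{*}\Psi_{1}^{*}\Omega$.

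For the angular momentum I would show $\ell\circ\Phi^{A}_{t}=\ell$ and $\ell\circ\Phi^{B}_{t}=\ell$ for all $t$, whence $\ell\circ\Phi^{(2)}_{\dt}=\ell$ follows by composing. For $\Phi^{B}_{t}$, differentiating $\ell$ along \eqref{eq:HamiltonB} and using $\dot{\br}_{j}=0$ gives $\tfrac{d}{dt}\ell=-\sum_{j}(\br_{j}\times\nabla_{j}E)\cdot\ez$, and this vanishes because $E$ is $\SO(2)$-invariant: $E(R\br_{1},\dots,R\br_{N})=E(\br)$ for all $R\in\SO(2)$, since each term of \eqref{eq:E} depends only on the rotation invariants $r_{j}$, $\br_{j}\cdot\br_{k}$ and $|\br_{j}-\br_{k}|$; differentiating this identity in the rotation angle at $R=I$ yields $\sum_{j}\nabla_{j}E\cdot J\br_{j}=0$, i.e. $\sum_{j}(\br_{j}\times\nabla_{j}E)\cdot\ez=0$ since $J\br_{j}=\br_{j}\times\ez$. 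For $\Phi^{A}_{t}$, I would use the explicit formula \eqref{eq:PhiA}: for each $j$, writing $\theta=t/\eps$ and $R=R(q_{j}t/\eps)\in\SO(2)$, expanding the $j$-th contribution to $\ell$ after the flow, the two self-cross-products drop and the cross terms combine through $\cos^{2}\theta+\sin^{2}\theta=1$ to give $R\br_{j}\times R\bp_{j}=\br_{j}\times\bp_{j}$, the last equality because a planar rotation preserves the (scalar) cross product; summing over $j$ gives $\ell\circ\Phi^{A}_{t}=\ell$. A more conceptual alternative is Noether's theorem: both $H_{A}$ and $H_{B}=E$ are invariant under the diagonal $\SO(2)$-action on $P$ whose momentum map is exactly $\ell$ of \eqref{eq:ell}, so $\ell$ is conserved along each split flow.

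I do not expect a serious obstacle here: the argument is a composition argument plus one trigonometric identity. The only point that genuinely needs care is verifying the $\SO(2)$-invariance cleanly — in particular that every term of \eqref{eq:E}, including the logarithm of the ratio in the interaction term, is built from rotation invariants — together with the bookkeeping that $\ell$ is the momentum map for the \emph{diagonal} rotation action on $P$ (acting identically on every $\br_{j}$ and every $\bp_{j}$), rather than for some other action; with that settled, the proof is short.
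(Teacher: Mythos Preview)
Your proposal is correct and follows essentially the same approach as the paper: symplecticity via the fact that $\Phi^{A}$ and $\Phi^{B}$ are Hamiltonian flows and symplectic maps compose, and conservation of $\ell$ via the $\SO(2)$-invariance of both $H_{A}$ and $H_{B}$ so that $\ell$ is a Noether invariant of each split flow. The paper's proof invokes only the Noether argument, whereas you additionally supply the explicit Jacobian and trigonometric verifications; these are welcome sanity checks but not needed, and your ``more conceptual alternative'' is precisely the paper's entire argument.
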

\begin{proof}
  The symplecticity is clear because both $\Phi^{A}$ and $\Phi^{B}$ from \eqref{eq:PhiA} and \eqref{eq:PhiB} define Hamiltonian flows with Hamiltonians $H^{A}$ and $H^{B}$ from \eqref{eq:H_A} and \eqref{eq:H_B}, respectively.
  We also see that $\Phi^{(2)}$ preserves $\ell$ because both $\Phi^{A}$ and $\Phi^{B}$ preserve $\ell$: Notice that both $H^{A}$ and $H^{B}$ possess the $\SO(2)$-symmetry as in \eqref{eq:SO2-sym}; hence $\ell$ is a Noether invariant of both $\Phi^{A}$ and $\Phi^{B}$.
\end{proof}

We can construct higher-order integrators from \eqref{eq:Strang} using the symmetric Triple Jump composition (see \cite{CrGo1989,Forest1989,Su1990,Yo1990} and \cite[Example~II.4.2]{HaLuWa2006}):
Using the 2nd-order method in \eqref{eq:Strang}, we recursively construct an $n$th-order ($n$ being even) method as follows:
\begin{equation}
  \label{eq:nth-order-tripleJump}
  \Phi^{(n)}_{\dt} \defeq \Phi^{(n-2)}_{\gamma_{3}\dt} \circ \Phi^{(n-2)}_{\gamma_{2}\dt} \circ \Phi^{(n-2)}_{\gamma_{1}\dt},
\end{equation}
where
\begin{equation*}
  \gamma_{1} = \gamma_{3} \defeq \frac{1}{2 - 2^{1/(n-1)}},
  \qquad
  \gamma_{2} \defeq -\frac{2^{1/(n-1)}}{2 - 2^{1/(n-1)}}.
\end{equation*}
We shall refer to the 4th-order method $\Phi^{(4)}$ defined above as \textsf{Split4}.

However, for a 6th-order integrator, it is more efficient to use Yoshida's method:
\begin{equation}
  \label{eq:Yoshida}
  \Phi^{(6),\text{Y}}_{\dt} \defeq \Phi^{(2)}_{\gamma_{7}\dt} \circ \dots \circ \Phi^{(2)}_{\gamma_{2}\dt} \circ \Phi^{(2)}_{\gamma_{1}\dt},
\end{equation}
with certain values of $\gamma_{i}$'s~\cite{Yo1990} (see also \cite[Section~V.3.2]{HaLuWa2006}).
We shall refer to this method as \textsf{Split6Y}.

Since all these integrators are compositions of $\Phi^{(2)}$, it follows easily from \Cref{prop:fundamental_properties} that the above higher-order integrators share the same properties as $\Phi^{(2)}$:
\begin{corollary}
  \label[corollary]{cor:fundamental_properties}
  For every positive even integer $n$, the integrator $\Phi^{(n)}$ defined recursively by \eqref{eq:Strang} and \eqref{eq:nth-order-tripleJump} are symplectic and preserve the angular momentum $\ell$ exactly; so does $\Phi^{(6),\text{Y}}$ from \eqref{eq:Yoshida}.
\end{corollary}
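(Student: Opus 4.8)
The plan is to reduce everything to \Cref{prop:fundamental_properties} together with two elementary closure facts: that a composition of symplectic maps is symplectic, and that a composition of maps each preserving $\ell$ also preserves $\ell$. Both are immediate — symplecticity because $(\Psi_1 \circ \Psi_2)^{*}\Omega = \Psi_2^{*}(\Psi_1^{*}\Omega) = \Psi_2^{*}\Omega = \Omega$, and invariance of $\ell$ because if $\ell \circ \Psi_i = \ell$ for each $i$ then $\ell \circ (\Psi_1 \circ \dots \circ \Psi_k) = \ell$ by direct substitution. Also note that each $\Phi^{(2)}_{s}$ for any real step $s$ (in particular $s = \gamma_i \dt$) enjoys the conclusion of \Cref{prop:fundamental_properties}, since that proposition's proof makes no use of the sign or magnitude of the time step.

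Next I would handle $\Phi^{(n)}$ by induction on the even integer $n$. The base case $n = 2$ is exactly \Cref{prop:fundamental_properties}. For the inductive step, assume $\Phi^{(n-2)}_{s}$ is symplectic and preserves $\ell$ for every $s \in \R$; then \eqref{eq:nth-order-tripleJump} exhibits $\Phi^{(n)}_{\dt}$ as the composition $\Phi^{(n-2)}_{\gamma_3 \dt} \circ \Phi^{(n-2)}_{\gamma_2 \dt} \circ \Phi^{(n-2)}_{\gamma_1 \dt}$ of three such maps, so the two closure facts give the claim for $\Phi^{(n)}$. The Yoshida integrator $\Phi^{(6),\text{Y}}$ is handled the same way in a single step: \eqref{eq:Yoshida} writes it as a composition of seven maps of the form $\Phi^{(2)}_{\gamma_i \dt}$, each symplectic and $\ell$-preserving by \Cref{prop:fundamental_properties}, hence so is the composition.

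I do not anticipate a genuine obstacle here; the statement is a formal corollary. The only point that needs a sentence of care is making explicit that \Cref{prop:fundamental_properties}, though stated for step $\dt$, applies verbatim to an arbitrary (possibly negative) step $\gamma_i \dt$ — this is what licenses using it for the rescaled substeps in both \eqref{eq:nth-order-tripleJump} and \eqref{eq:Yoshida}. One should also remark that the compositions in \eqref{eq:nth-order-tripleJump} and \eqref{eq:Yoshida} are well-defined wherever the relevant flows are (the $A$-flow is global and the $B$-flow exists as long as the vortices stay in $\mathcal{D}^N$ and do not collide), so the argument is carried out on the natural domain of definition of each composed map. No new computation beyond \eqref{eq:PhiA}–\eqref{eq:PhiB} and the $\SO(2)$-symmetry already used in \Cref{prop:fundamental_properties} is required.
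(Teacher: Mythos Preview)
Your proposal is correct and matches the paper's approach: the paper does not give a formal proof but simply notes that since all these integrators are compositions of $\Phi^{(2)}$, the corollary follows easily from \Cref{prop:fundamental_properties}. You have spelled out the closure-under-composition argument and the induction that the paper leaves implicit, and your remark that \Cref{prop:fundamental_properties} applies to arbitrary (including negative) steps $\gamma_i\dt$ is a worthwhile clarification since $\gamma_2 < 0$ in the Triple Jump.
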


\begin{remark}
  One could construct the same splitting methods for a class of systems for which the Hamiltonian is written as $H(\br,\bp) = H_{A}(\br,\bp) + H_{B} (\br)$, where $H_{A}$ is quadratic and the exponential map of the $A$-flow can be found explicitly as in \eqref{eq:PhiA}.
\end{remark}

\section{Modified Hamiltonian and Error Estimates of Hamiltonian}
\label{sec:modified_Hamiltonian}
\subsection{Modified Hamiltonian}
The above symplectic integrators do not preserve the Hamiltonian~\eqref{eq:H} exactly.
However, one can use the backward error analysis to prove that the symplectic integrators do not exhibit drifts in the Hamiltonian; this in turn implies that a $p$-th order symplectic method maintains errors in the Hamiltonian in the order of $(\dt)^{p}$ for a long time; see, e.g., \cite[Chapter~5]{LeRe2004} and \cite[Chapter~IX]{HaLuWa2006}.
This is in contrast to many other non-symplectic methods that often exhibit drifts in the Hamiltonian that result in significant errors in the Hamiltonian in the long run.

The central idea of the backward error analysis of symplectic integrators for Hamiltonian systems is to show that there is a modified Hamiltonian system
\begin{equation}
  \label{eq:Hamilton-modified}
  \dot{\bz} = \mathbb{J}_{2N}\, \nabla_{\bz} \tilde{H}(\bz; \dt),
\end{equation}
satisfied \textit{exactly} by, e.g., the flow $\Phi^{(2)}$ from \eqref{eq:Strang}, that is,
\begin{equation*}
  \od{}{t} \Phi^{(2)}_{t}(\bz) = \mathbb{J}_{2N}\, \nabla_{\bz} \tilde{H}\parentheses{ \Phi^{(2)}_{t}(\bz); \dt }.
\end{equation*}
Note that the modified Hamiltonian $\tilde{H}$ depends on the time step $\dt$.

One may prove that such $\tilde{H}$ exists for the splitting methods like ours (see, e.g., \cite[Section~5.4]{LeRe2004}).
In practice, one obtains its expressions as an asymptotic series in $\dt$; see, e.g., \cite{Ta1994}, \cite{BeGi1994}, \cite{Re1999}, \cite[Chapter~5]{LeRe2004}, \cite[Chapter~IX]{HaLuWa2006}, and references therein.

In order to estimate the order of accuracy of preservation of the Hamiltonian, one usually obtains an asymptotic expansion of the Hamiltonian in $\dt$.
However, in our setting, we are particularly interested in the regime with $\eps \ll 1$; one then usually needs to take $\dt$ smaller than $\eps$ to capture the highly oscillatory solution in the timescale of $\eps$.
Hence we would like to see how $\eps$ affects the modified Hamiltonian:

\begin{proposition}
  \label[proposition]{prop:H_error_estimate}
  For the second-order method \textsf{Split2} defined in \eqref{eq:Strang}, the modified Hamiltonian $\tilde{H}$ from \eqref{eq:Hamilton-modified} is related to the original Hamiltonian $H$ from \eqref{eq:H} as follows:
  \begin{equation}
    \label{eq:H_error_estimate}
    \tilde{H}(\bz) - H(\bz) =
    \begin{cases}
      O\parentheses{ (\dt)^{2}/\eps } & \bz \in \mathcal{K}, \smallskip\\
      O\parentheses{ (\dt)^{2}/\eps^{2} } & \bz \notin \mathcal{K}.
    \end{cases}
  \end{equation}
\end{proposition}
\begin{proof}
  Given that it is a splitting method for a Hamiltonian system, one may follow, for example, \cite[Section~5.4]{LeRe2004}) to obtain the first few terms of the asymptotic expansion of the modified Hamiltonian $\tilde{H}$:
  Using the Poisson bracket defined as
  \begin{equation*}
    \PB{F}{G} \defeq \sum_{j=1}^{N} \parentheses{ \pd{F}{\br_{j}} \cdot \pd{G}{\bp_{j}} - \pd{G}{\br_{j}} \cdot \pd{F}{\bp_{j}}},
  \end{equation*}
  we have
  \begin{align*}
    \tilde{H} &= \frac{1}{\eps} H_{A} + H_{B} - \frac{(\dt)^{2}}{24} \PB{\frac{1}{\eps}H_{A}}{\PB{\frac{1}{\eps}H_{A}}{H_{B}}} \\
              &\quad + \frac{(\dt)^{2}}{12} \PB{H_{B}}{\PB{H_{B}}{\frac{1}{\eps}H_{A}}} + O\parentheses{ (\dt)^{3} }.
  \end{align*}
  So we observe that the $O((\dt)^{2})$ leading error terms differ in scales:
  \begin{align}
    \tilde{H} - H &= - \frac{(\dt)^{2}}{24 \eps^{2}} \PB{H_{A}}{\PB{H_{A}}{H_{B}}} \nonumber\\
                  &\quad + \frac{(\dt)^{2}}{12 \eps} \PB{H_{B}}{\PB{H_{B}}{H_{A}}} + O\parentheses{ (\dt)^{3} }.
                    \label{eq:modified_H}
  \end{align}
  Specifically, the first term on the right-hand side is the leading term for the difference $\tilde{H} - H$ between the modified and the real Hamiltonians.
  
  Moreover, using the expressions~\eqref{eq:H_A} and \eqref{eq:H_B}, one finds
  \begin{align*}
    \PB{H_{A}}{\PB{H_{A}}{H_{B}}}
    = \sum_{j=1}^{N} \sum_{k=1}^{N} \bm{P}_{j}^{T} D^{2}_{jk}E(\br) \bm{P}_{k}
     + 2 \sum_{j=1}^{N} q_{j} \bm{P}_{j} \cdot (\nabla_{j}E(\br) \times \ez),
  \end{align*}
  where we defined, for each $j \in \{1, \dots, N\}$,
  \begin{equation*}
    \bm{P}_{j} \defeq \bp_{j} - q_{j} J \br_{j} \in \R^{2},
  \end{equation*}
  and $D^{2}_{jk}E$ is the $2 \times 2$ matrix defined for each pair $(j,k)$ as
  \begin{equation*}
    D^{2}_{jk}E \defeq \pd{^{2}E}{\br_{j} \partial \br_{k}}.
  \end{equation*}
  On the other hand,
  \begin{equation*}
    \PB{H_{B}}{\PB{H_{B}}{H_{A}}} = \norm{ \nabla E(\br) }^{2},
  \end{equation*}
  where $\norm{\,\cdot\,}$ is the Euclidean norm.
  Notice that $\PB{H_{A}}{\PB{H_{A}}{H_{B}}} = 0$ when $\bz = (\br,\bp) \in \mathcal{K}$ because then $\bm{P}_{j} = \bm{0}$ for $1 \le j \le N$.
  This shows that, if $\bz \in \mathcal{K}$, then the leading error term proportional to $(\dt)^{2}/\eps^{2}$ on the right-hand side of \eqref{eq:modified_H} does not contribute to the difference between $\tilde{H}$ and $H$.
\end{proof}

\subsection{Error Estimates of Hamiltonian}
The above proposition alone does not suggest that the Hamiltonian $H$ is preserved with $O((\dt)^{2}/\eps)$ if the initial condition satisfies $\bz(0) \in \mathcal{K}$, because $\bz(t) \notin \mathcal{K}$ in general, i.e., $\mathcal{K}$ is not an invariant manifold of the dynamics.
However, it is proved in \cite{OhRiGo-small_mass1} that if $\bz(0) \in \mathcal{K}$, then the deviation of the solution $\bz(t)$ from $\mathcal{K}$ is $O(\eps)$ assuming that the solution experiences no inter-vortex nor vortex--wall collisions.

More specifically, let $U \subset \mathcal{D}^{N}$ be an open subset excluding collision points
\begin{equation*}
  \Set{ (\br_{1}, \dots, \br_{N}) \in \mathcal{D}^{N} | \br_{i} = \br_{j} \text{ with some } i \neq j },
\end{equation*}
and $\mathcal{C} \subset U$ be a compact set and define
\begin{equation}
  \label{eq:c1_c2}
  c_{1} \defeq \sup_{\br \in \mathcal{C}} \norm{ \nabla E(\br) },
  \qquad
  c_{2} \defeq \sup_{\br \in \mathcal{C}} \norm{ D^{2}E(\br) },
\end{equation}
where $\norm{\,\cdot\,}$ for the Hessian is the operator norm associated with the Euclidean norm $\norm{\,\cdot\,}$ on $\R^{2N}$.
Note that both $c_{1}$ and $c_{2}$ are bounded because $E$ is a smooth function on $U$.

Then, let us define
\begin{equation}
  \label{eq:P}
  \bm{P}(t) \defeq
  \begin{bmatrix}
    \bm{P}_{1}(t) \\
    \vdots \\
    \bm{P}_{N}(t)
  \end{bmatrix}
  \in \R^{2N},
  \quad\text{where}\quad
  \bm{P}_{j}(t) \defeq \bp_{j}(t) - q_{j} J \br_{j}(t) \in \R^{2}.
\end{equation}
Notice that $\norm{ \bm{P}(t) } = 0$ if and only if $\bz(t) \in \mathcal{K}$.
So $\norm{ \bm{P}(t) }$ gives an estimate of the deviation of the solution $\bz(t)$ from $\mathcal{K}$.
We can prove that (see \cite{OhRiGo-small_mass1}) then
\begin{equation*}
  \norm{ \bm{P}(t) } \le  \eps\, c_{1}\, e^{c_{2}\,t/2},
\end{equation*}
assuming that $\bz(0) \in \mathcal{K} \cap \mathcal{C}$ and that the solution $\bz(t)$ is contained in $\mathcal{C}$; we note that $c_{1}$ and $c_{2}$ are independent of $\eps$ and $t$ as seen in their definitions in \eqref{eq:c1_c2}.

As a numerical example, \Cref{fig:t-normP_inK} shows the time evolution of the deviation $\norm{ \bm{P}(t) }$ for the solution obtained by \textsf{Split2} for the vortex dipole case considered in \Cref{ssec:dipole} below with $\eps = 0.01$ and $\bz(0) \in \mathcal{K}$.
One indeed sees that $\norm{\bm{P}(t)} = O(\eps)$.
\begin{figure}[htbp]
  \centering
  \includegraphics[width=.8\textwidth]{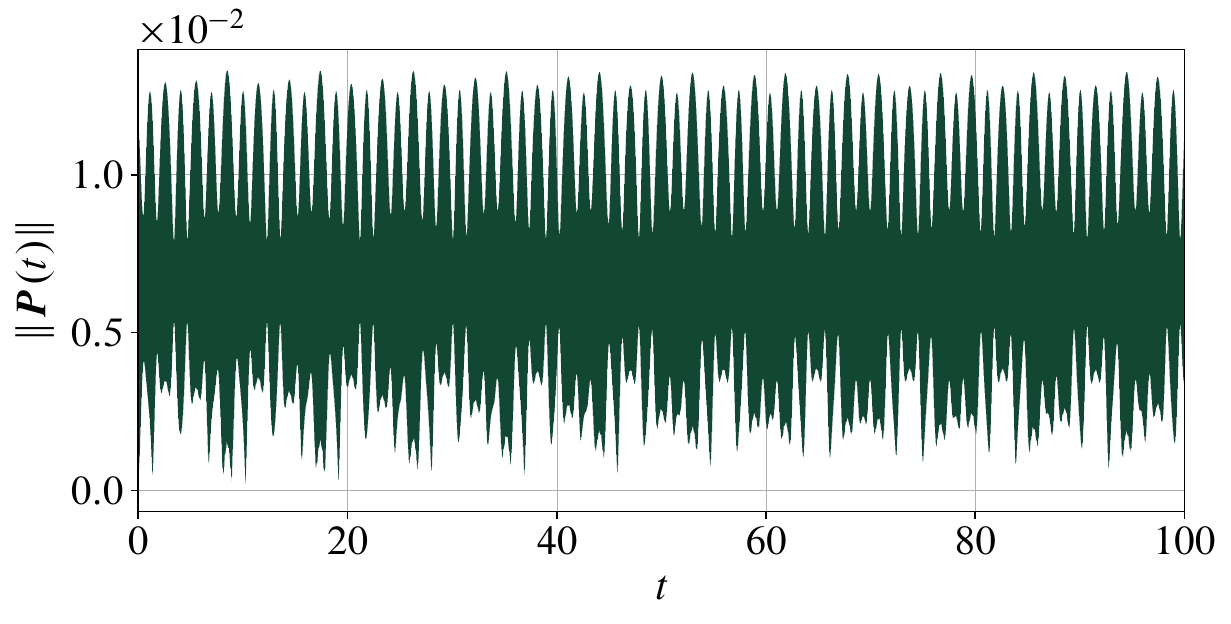}
  \caption{
    Time evolution of deviation $\norm{\bm{P}(t)}$ of massive solution $\bz(t)$ obtained by \textsf{Split2} with parameters and initial conditions from \eqref{eq:dipole_IC_in_K} below (vortex dipole case) with $\eps = 0.01$ and $\bz(0) \in \mathcal{K}$.
    One can see that $\norm{\bm{P}(t)} = O(\eps)$ for a long time.
  }
  \label{fig:t-normP_inK}
\end{figure}

Assuming that the numerical solution indeed satisfies this property, we have the following error estimate of the Hamiltonian:
\begin{theorem}
  Suppose that the numerical solution $\{ \bz_{n} \}$ of \textsf{Split2} defined in \eqref{eq:Strang} with initial condition $\bz_{0} \in \mathcal{K}$ stays in a compact subset $\mathcal{C} \subset U$ described above, and also that the corresponding $\{ \bm{P}_{n} \defeq \bm{P}(\bz_{n}) \}$ (see \eqref{eq:P}) satisfies $\norm{\bm{P}_{n}} = O(\eps)$.
  Then we have
  \begin{equation*}
    H(\bz_{n}) - H(\bz_{0}) = O\parentheses{ (\dt)^{2}/\eps }.
  \end{equation*}
\end{theorem}
\begin{proof}
  Following the standard approach in proving the long-time preservation of the Hamiltonian of \cite{BeGi1994} (see also \cite[Section~5.2]{LeRe2004} and \cite[Section~IX.8]{HaLuWa2006}), one may find an optimal truncation $\overline{H}$ of the asymptotic expansion for $\tilde{H}$ as follows:
  \begin{equation*}
    \overline{H} \defeq H + (\dt)^{2} \tilde{H}^{(2)} + \dots + (\dt)^{\mathcal{N}} \tilde{H}^{(\mathcal{N})},
  \end{equation*}
  so that the flow $\overline{\Phi}$ of the corresponding modified Hamiltonian system
  \begin{equation*}
    \dot{\bz} = \mathbb{J}_{2N}\, \nabla_{\bz} \overline{H}(\bz; \dt),
  \end{equation*}
  satisfies
  \begin{equation*}
    \norm{ \Phi^{(2)}_{\dt}(\bz) - \overline{\Phi}_{\dt} } \le \frac{c_{3}}{\eps}\,\dt\,e^{-c_{4}\,\eps/\dt}
  \end{equation*}
  with some constants $c_{3}, c_{4} > 0$ that are independent of $\eps$ but are $O(N)$ due to the inter-vortex interaction terms; each vortex interacts with $N - 1$ others; we note that the two occurrences of $\eps$ come from the $1/\eps$ terms of the vector field of the system~\eqref{eq:Hamilton}; see, e.g., \cite[Theorem~IX.7.6]{HaLuWa2006} on the details of estimates of this type.
  Then one can show that
  \begin{equation*}
    \abs{ \overline{H}(\bz_{n}) - \overline{H}(\bz_{0}) } \le \frac{\lambda\,c_{3}}{\eps^{2}}\, n \dt\, e^{-c_{4}\,\eps/\dt},
  \end{equation*}
  where $\lambda/\eps > 0$ is a Lipschitz constant for $\overline{H}$, where the $\eps$-dependence comes from the $1/\eps$ terms of the Hamiltonian~\eqref{eq:H}.
  Therefore,
  \begin{align*}
    H(\bz_{n}) - H(\bz_{0})
    &= (\dt)^{2} \parentheses{ \tilde{H}^{(2)}(\bz_{n}) - \tilde{H}^{(2)}(\bz_{0}) } \\
    &\quad + (\dt)^{3} (F(\bz_{n}) - F(\bz_{0})) + O\parentheses{ n (\dt/\eps^{2})\, e^{-c_{4}\,\eps/\dt} },
  \end{align*}
  where, as found in \eqref{eq:modified_H},
  \begin{equation*}
    \tilde{H}^{(2)} \defeq -\frac{1}{24 \eps^{2}} \PB{H_{A}}{\PB{H_{A}}{H_{B}}}
    + \frac{1}{12 \eps} \PB{H_{B}}{\PB{H_{B}}{H_{A}}},
  \end{equation*}
  and
  \begin{equation*}
    F \defeq  \tilde{H}^{(3)} + \dots + (\dt)^{\mathcal{N}-3} \tilde{H}^{(\mathcal{N})},
  \end{equation*}
  and each $\tilde{H}^{(k)}$ is a linear combination of $k$-time repeated Poisson brackets of $\frac{1}{\eps} H_{A}$ and $H_{B}$.
  Since both $H_{A}$ and $H_{B}$ are smooth in $U$, so are all repeated Poisson brackets of them; hence they are are bounded on $\mathcal{C}$, and hence so is $F$.
  
  However,
  \begin{align*}
    \tilde{H}^{(2)}(\bz_{n}) - \tilde{H}^{(2)}(\bz_{0})
    &= -\frac{1}{24 \eps^{2}} \PB{H_{A}}{\PB{H_{A}}{H_{B}}}(\bz_{n}) \\
    &\quad + \frac{1}{12 \eps} \parentheses{ \PB{H_{B}}{\PB{H_{B}}{H_{A}}}(\bz_{n}) - \PB{H_{B}}{\PB{H_{B}}{H_{A}}}(\bz_{0}) },
  \end{align*}
  because $\bz_{0} \in \mathcal{K}$ and $\PB{H_{A}}{\PB{H_{A}}{H_{B}}}$ vanishes on $\mathcal{K}$ as observed in the proof of \Cref{prop:H_error_estimate}.
  However, since $\norm{\bm{P}_{n}} = O(\eps)$ implies that $\bz_{n}$ is $O(\eps)$ away from $\mathcal{K}$, all the functions are smooth in $U$, and $\bz_{0}, \bz_{n}$ are in compact set $\mathcal{C}$, we have
  \begin{equation*}
    \PB{H_{A}}{\PB{H_{A}}{H_{B}}}(\bz_{n}) = O(\eps)
  \end{equation*}
  and that both $\PB{H_{B}}{\PB{H_{B}}{H_{A}}}(\bz_{n})$ and $\PB{H_{B}}{\PB{H_{B}}{H_{A}}}(\bz_{0})$ are $O(1)$.
  Hence we conclude that $\tilde{H}^{(2)}(\bz_{n}) - \tilde{H}^{(2)}(\bz_{0}) = O(1/\eps)$.
\end{proof}

We shall illustrate the above theorem numerically in the examples to follow.

\section{Single Massive Vortex ($N = 1$)}
\subsection{Testing with Single Massive Vortex}
Let us test the integrators using the single vortex example discussed in \Cref{ssec:single_vortex}.
Using the polar coordinates $(r, \theta)$ for $\br$ and $(p_{r}, p_{\theta})$ for $\bp$, the Hamiltonian~\eqref{eq:H} becomes
\begin{equation*}
  H = \frac{1}{2\eps}\parentheses{ p_{r}^{2} + \parentheses{ \frac{\ell}{r} + q_{1} r }^{2} } + \ln\parentheses{ 1 - r^{2} },
\end{equation*}
where $\ell$, the angular momentum, is an invariant of the system.
Since the above expression of $H$ depends only on $(r, p_{r})$, the level set of $H$ at its initial value on $(r,p_{r})$-plane gives the trajectory $(r(t),p_{r}(t))$.

As in \Cref{ssec:single_vortex}, we set $q_{1} = 1$ and $\eps = 0.01$, and consider the initial conditions \eqref{eq:singlevort_IC_in_K} and \eqref{eq:singlevort_IC_off_K}, for which $\bz(0) = (\br(0), \bp(0)) \in \mathcal{K}$ and $\bz(0) \notin \mathcal{K}$, respectively.
Then the level set of $H$ gives a closed curve in each case.

\begin{figure}[htbp]
  \centering
    \begin{subcaptionblock}[b]{.52\linewidth}
    \centering
    \includegraphics[width=\linewidth]{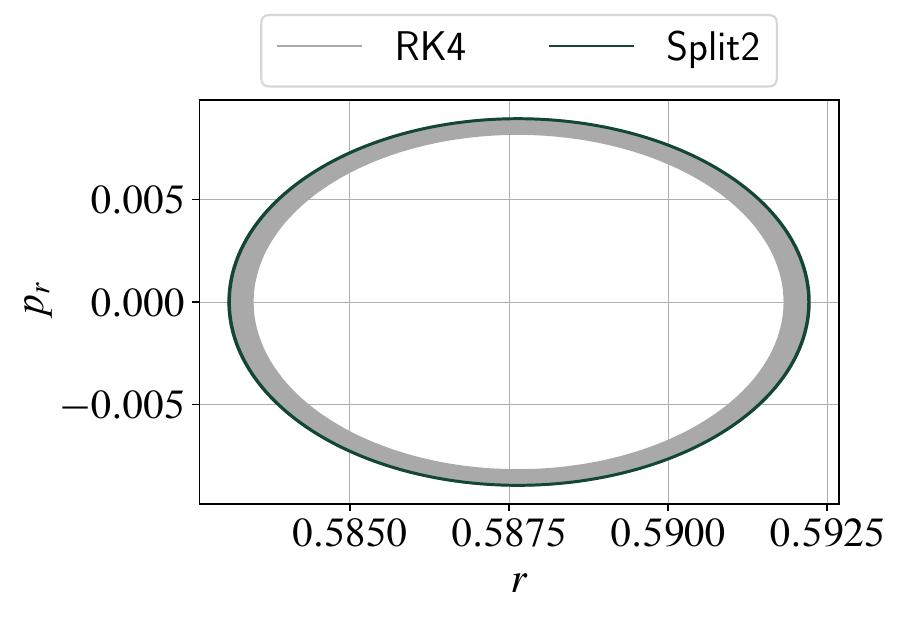}
    \caption{Initial condition~\eqref{eq:singlevort_IC_in_K} with $\bz(0) \in \mathcal{K}$.}
    \label{fig:xy_inK_001} 
  \end{subcaptionblock}
  \begin{subcaptionblock}[b]{.47\linewidth}
    \centering
    \includegraphics[width=\linewidth]{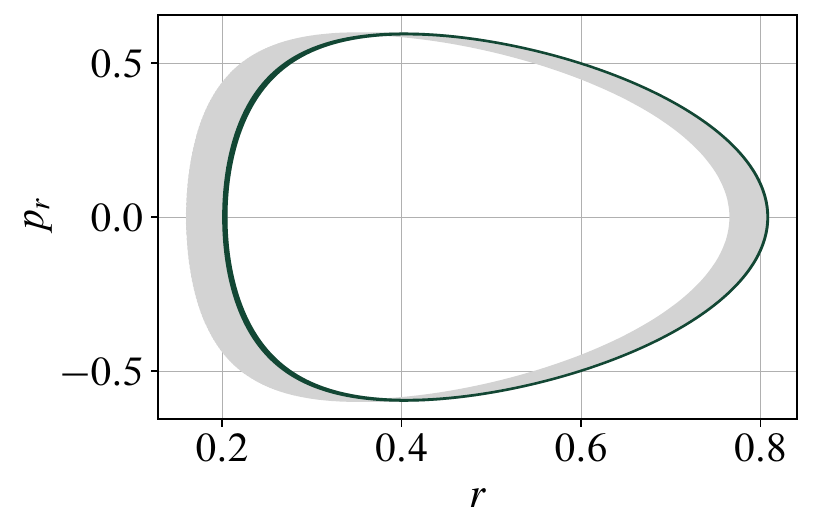}
    \caption{Initial condition~\eqref{eq:singlevort_IC_off_K} with $\bz(0) \notin \mathcal{K}$.}
    \label{fig:xy_offK_001}
  \end{subcaptionblock}
  \caption{Phase portraits on the $(r,p_{r})$-plane of single massive vortex dynamics for $0 \le t \le 200$ computed by Runge--Kutta method (\textsf{RK4}) as well as 2nd-order splitting method~\eqref{eq:Strang} (\textsf{Split2}); $q_{1} = 1$, $\eps = 0.01$, and $\dt = 10^{-3}$.}
  \label{fig:phaseportrait_001}
\end{figure}

\Cref{fig:phaseportrait_001} shows the trajectories or the (projected) phase portraits $(r(t),p_{r}(t))$ for $0 \le t \le 200$ using the standard (4th-order) Runge--Kutta method (\textsf{RK4}) as well as our 2nd-order splitting method~\eqref{eq:Strang} (\textsf{Split2}) with the initial conditions \eqref{eq:singlevort_IC_in_K} and \eqref{eq:singlevort_IC_off_K}, and $\dt = 10^{-3}$.
One observes that the \textsf{RK4} solution significantly deviates from a closed curve, especially in the latter case with $\bz(0) \notin \mathcal{K}$.
On the other hand, the \textsf{Split2} solution exhibits much smaller deviation from a closed curve, despite being a lower-order method than \textsf{RK4}.
Notice also the difference in scales in the two plots: The drift in the \textsf{RK4} solution in the latter case is far greater than that of the former.

\Cref{fig:t-H_001} shows the time evolution of the error in the Hamiltonian $|H(t) - H_{0}|$ where $H_{0} \defeq H(\br(0), \bp(0))$ is the initial value of Hamiltonian $H$.

\begin{figure}[htbp]
  \centering
    \begin{subcaptionblock}[b]{.49\linewidth}
    \centering
    \includegraphics[width=\linewidth]{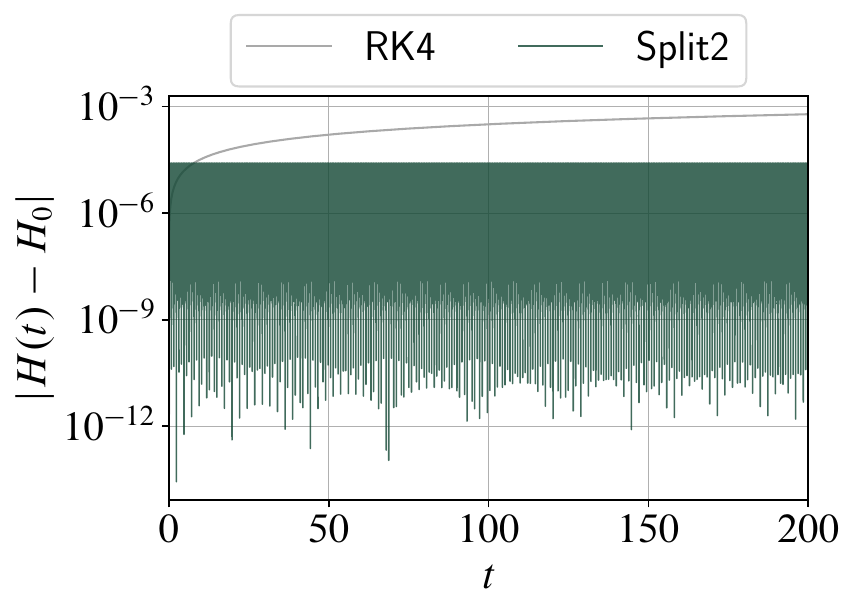}
    \caption{Initial condition~\eqref{eq:singlevort_IC_in_K} with $\bz(0) \in \mathcal{K}$.}
    \label{fig:t-H_inK_001}
  \end{subcaptionblock}
  \begin{subcaptionblock}[b]{.49\linewidth}
    \centering
    \includegraphics[width=\linewidth]{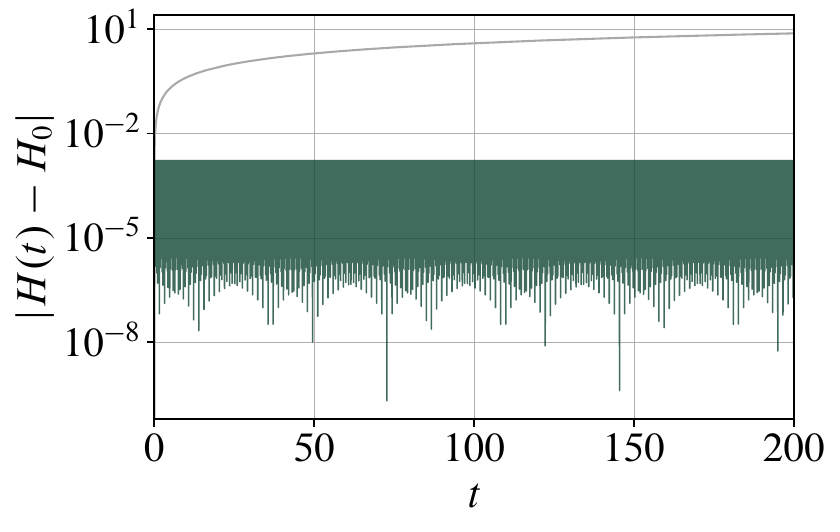}
    \caption{Initial condition~\eqref{eq:singlevort_IC_off_K} with $\bz(0) \notin \mathcal{K}$.}
    \label{fig:t-H_offK_001}
  \end{subcaptionblock}
  \caption{Time evolution of errors in Hamiltonian $H$ for the same single-vortex problem from \Cref{fig:phaseportrait_001}. The errors with \textsf{Split2} are in agreement with the prediction discussed in \Cref{sec:modified_Hamiltonian}.}
  \label{fig:t-H_001}
\end{figure}

Recall from \Cref{prop:H_error_estimate} that the leading order term in the difference between the modified Hamiltonian $\tilde{H}(\bz)$ and the original Hamiltonian $H(\bz)$ differs depending on whether $\bz$ is in $\mathcal{K}$ or not: It is proportional to $(\dt)^{2}/\eps^{2}$ if $\bz \notin \mathcal{K}$ whereas $(\dt)^{2}/\eps$ if  $\bz \in \mathcal{K}$.
Since $\eps = 10^{-2}$ and $\dt = 10^{-3}$, we have $(\dt)^{2}/\eps = 10^{-4}$ whereas $(\dt)^{2}/\eps^{2} = 10^{-2}$.
Thus we expect $|H(t) - H_{0}|$ to be in the order of $10^{-4}$ when $\bz(0) \in \mathcal{K}$ and $10^{-2}$ when $\bz(0) \notin \mathcal{K}$.
\Cref{fig:t-H_inK_001} and \Cref{fig:t-H_offK_001} indeed show that the maximum errors are in those scales.

\subsection{What if $\dt \ge \eps$?}
In the above test, we picked $\dt = \eps/10$ to resolve the highly oscillatory motion of $O(\eps)$-scale.
What if $\dt$ is not small enough to resolve such oscillations?
\Cref{fig:phaseportrait_001-2} shows the phase portraits for the same two cases with $\bz(0) \in \mathcal{K}$ and $\bz(0) \notin \mathcal{K}$, but with two time steps $\dt = 0.01 = \eps$ and $\dt = 0.05 = 5\eps$.
For $\dt = 0.01$, the solutions computed by \textsf{RK4} deviate from the energy contour very quickly and spiral inside the contour, whereas the \textsf{Split2} solutions seem more stable, albeit exhibiting slow but eventually very significant drift.
For $\dt = 0.05$, the \textsf{RK4} diverged very quickly and hence are not shown in the figures, whereas the \textsf{Split2} solutions behave in a similar way as in $\dt = 0.01$ with an even more significant drift.
Nevertheless, these numerical experiments show that \textsf{Split2} is more stable than \textsf{RK4} with $\dt \ge \eps$, and fails more gracefully as $\dt$ becomes greater than $\eps$.
\begin{figure}[htbp]
  \centering
  \begin{subcaptionblock}[b]{.49\linewidth}
    \centering
    \includegraphics[width=\linewidth]{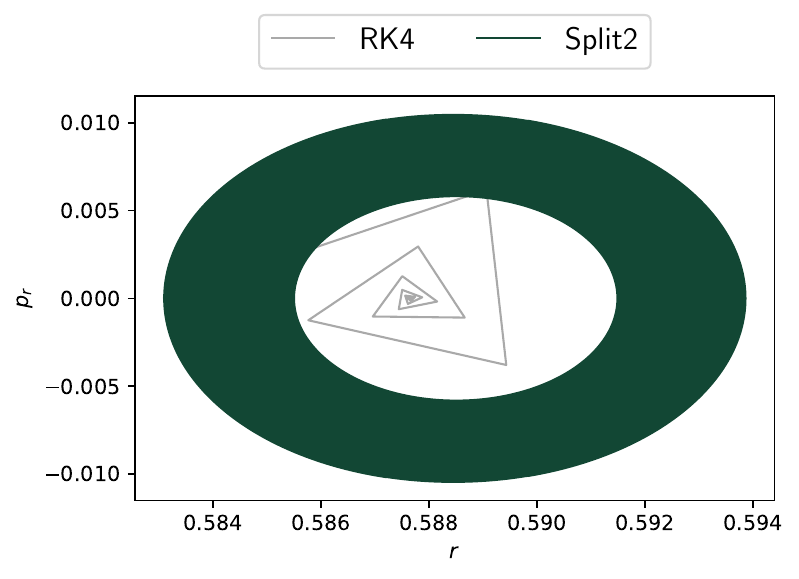}
    \caption{$\bz(0) \in \mathcal{K}$ and $\dt = 0.01$.}
    \label{fig:xy_inK_001_1e-2}
  \end{subcaptionblock}
  \begin{subcaptionblock}[b]{.49\linewidth}
    \centering
    \includegraphics[width=\linewidth]{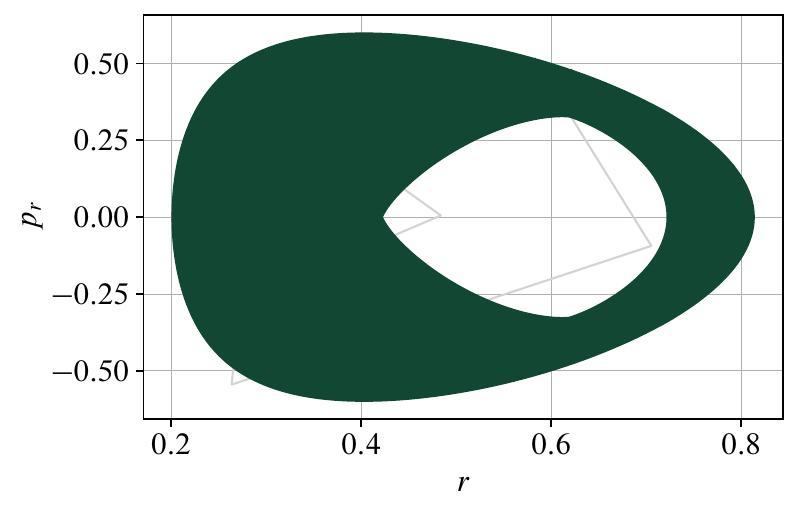}
    \caption{$\bz(0) \notin \mathcal{K}$ and $\dt = 0.01$.}
    \label{fig:xy_offK_001_1e-2}
  \end{subcaptionblock}
  \vspace{1ex}\\
  \begin{subcaptionblock}[b]{.49\linewidth}
    \centering
    \includegraphics[width=\linewidth]{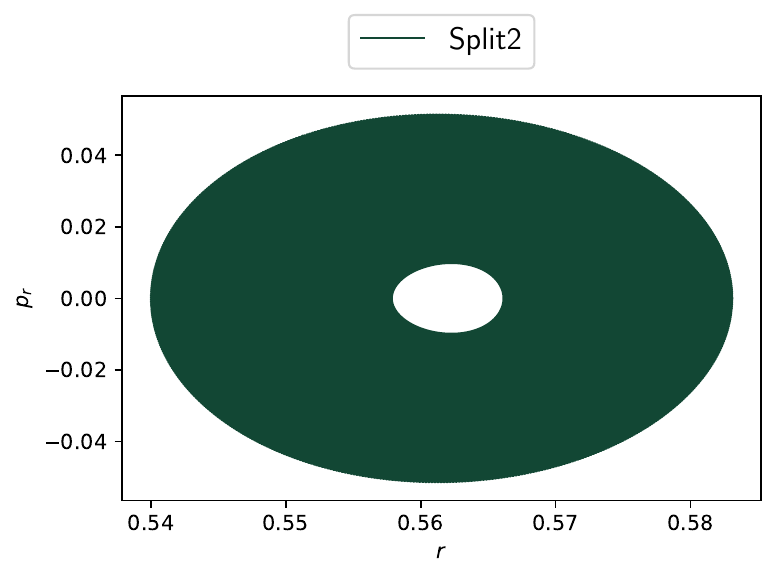}
    \caption{$\bz(0) \in \mathcal{K}$ and $\dt = 0.05$.}
    \label{fig:xy_inK_001_5e-2}
  \end{subcaptionblock}
  \begin{subcaptionblock}[b]{.49\linewidth}
    \centering
    \includegraphics[width=\linewidth]{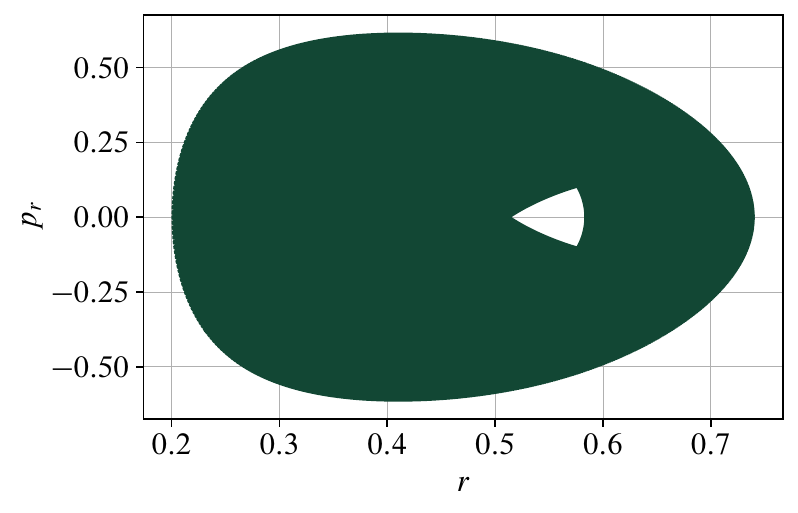}
    \caption{$\bz(0) \notin \mathcal{K}$ and $\dt = 0.05$.}
    \label{fig:xy_offK_001_5e-2}
  \end{subcaptionblock}
  \caption{Same as \Cref{fig:phaseportrait_001} (single massive point vortex) except that the time step $\dt$ is taken as $0.01$ or $0.05$ so that $\dt \ge \eps$.
    The results with \textsf{RK4} with $\dt = 0.05$ are not shown because they quickly diverge.
  }
  \label{fig:phaseportrait_001-2}
\end{figure}

\begin{remark}
  We note in passing that it is not very practical to consider numerical solutions of the massive system~\eqref{eq:Hamilton} for $\eps \ll 1$ with $\dt \ge \eps$ in general for two reasons.
  Firstly, the numerical solutions would be inaccurate as illustrated in \Cref{fig:phaseportrait_001-2} even with our splitting integrators.
  Secondly, and perhaps more importantly, taking $\dt \ge \eps$ essentially implies that one disregards the fast oscillations in the solutions, and hence is effectively interested in the solution behavior after averaging out the fast oscillations.
  However, it is proved in \cite{OhRiGo-small_mass1} that a certain form of averaging of the massive equations~\eqref{eq:Hamilton} yields the well-known (massless) point vortex equations~\eqref{eq:Kirchhoff} assuming that the initial condition is close to $\mathcal{K}$.
  More specifically, the massless solution gives an approximation of the massive solution with $O(\eps)$ accuracy.
  So if one is mainly interested in such averaged behaviors of the massive system~\eqref{eq:Hamilton} with $\eps \ll 1$, then it is much more effective to solve the averaged (massless) equations~\eqref{eq:Kirchhoff} numerically: They are a half the dimension of the massive ones~\eqref{eq:Hamilton}, and their dynamics is ``slow'' without fast oscillations; hence one does not have to worry about taking $\dt$ smaller than $\eps$.
\end{remark}

\section{Massive Vortex Dipole ($N = 2$)}
\subsection{Massive Vortex Dipole}
\label{ssec:dipole}
Consider the vortex dipole case with the following parameters and initial conditions:
\begin{equation}
  \label{eq:dipole_IC_in_K}
  \begin{array}{c}
    N = 2,\quad  q_{1} = -1,\quad q_{2} = 1,\quad \eps = 0.01, \medskip\\
    \br_{1}(0) =
    \begin{bmatrix}
      0.6 \\
      0.2
    \end{bmatrix},
    \quad
    \br_{2}(0)=
    \begin{bmatrix}
      -0.3 \\
      -0.4
    \end{bmatrix},
    \medskip\\
    \bp_{1}(0) = q_{1} J \br_{1}(0) =
    \begin{bmatrix}
      -0.2 \\
      0.6
    \end{bmatrix},
    \medskip\\
    \bp_{2}(0) = q_{2} J \br_{2}(0) =
    \begin{bmatrix}
      -0.4 \\
      0.3
    \end{bmatrix}.
  \end{array}
\end{equation}
Notice that $\bp_{j}(0) = q_{j} J \br_{j}(0)$ so that $\bz(0) \in \mathcal{K}$.

We also consider another set of initial conditions with the same conditions as above except
\begin{equation}
  \label{eq:dipole_IC_off_K}
  \begin{array}{c}
    \bp_{1}(0) = q_{1} J \br_{1}(0) +
    \begin{bmatrix}
      -0.15 \\
      0.125
    \end{bmatrix}
    =
    \begin{bmatrix}
      -0.35 \\
      0.725
    \end{bmatrix},
    \medskip\\
    \bp_{2}(0) = q_{2} J \br_{2}(0) +
    \begin{bmatrix}
      0.075 \\
      0.2
    \end{bmatrix}
    =
    \begin{bmatrix}
      -0.325 \\
      0.5
    \end{bmatrix},
  \end{array}
\end{equation}
which gives $\bz(0) \notin \mathcal{K}$.

\subsection{Comparison of Trajectories}
\Cref{fig:x-y_001} shows the trajectories of both vortices for the above two sets of initial conditions, computed by \textsf{Split6Y}.
Just as we saw in \Cref{fig:t-xy_001} for the single vortex case, the trajectories have only small fluctuations in the former case with $\bz(0) \in \mathcal{K}$.
On the other hand, for the latter case with $\bz(0) \in \mathcal{K}$, the trajectories are highly oscillatory, clearly exhibiting the separation of scales as we have observed in \Cref{fig:t-xy_001} for the single vortex case.

\begin{figure}[htbp]
  \centering
  \begin{subcaptionblock}{.5\linewidth}
    \centering
    \includegraphics[width=\linewidth]{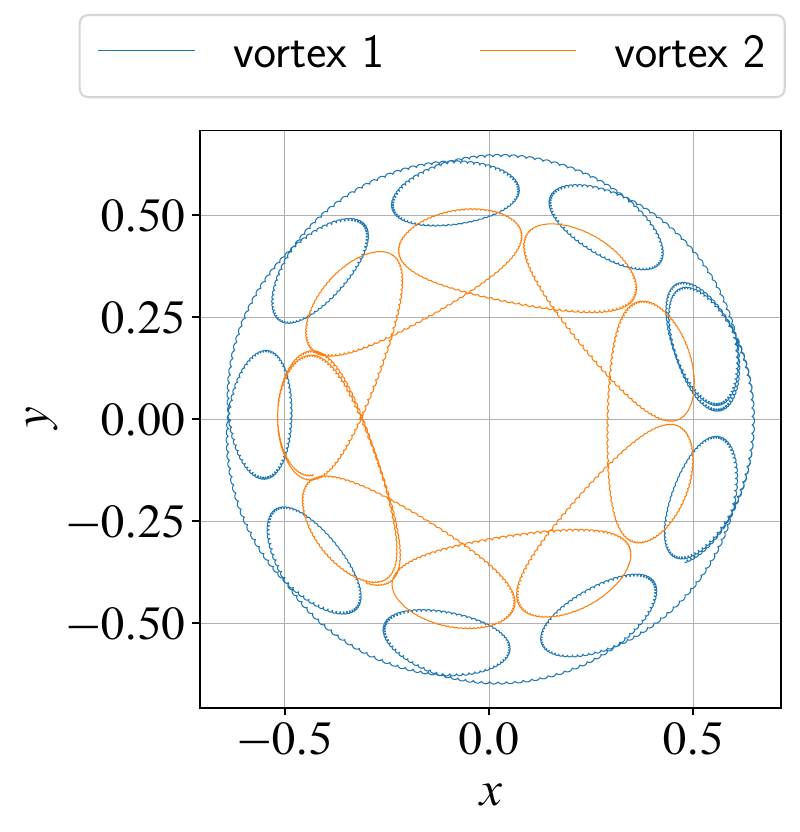}
    \caption{Initial condition~\eqref{eq:dipole_IC_in_K} with $\bz(0) \in \mathcal{K}$.}
    \label{fig:x-y_inK_001}
  \end{subcaptionblock}
  \begin{subcaptionblock}{.485\linewidth}
    \centering
    \includegraphics[width=\linewidth]{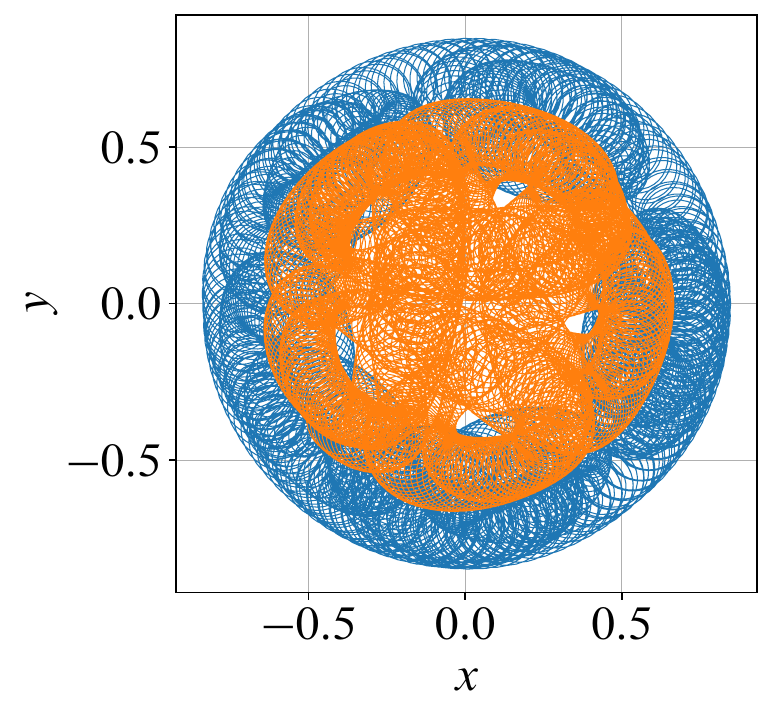}
    \caption{Initial condition~\eqref{eq:dipole_IC_off_K} with $\bz(0) \notin \mathcal{K}$.}
    \label{fig:x-y_offK_001}
  \end{subcaptionblock}
  \caption{Trajectories of massive vortex dipole; $N = 2$, $q_{1} = -1$, $q_{2} = 1$, $\eps = 0.01$, and $\dt = 10^{-3}$; computed by \textsf{Split6Y} and plotted for $0 \le t \le 30$.}
  \label{fig:x-y_001}
\end{figure}

\subsection{Comparison of Errors in Invariants}
\Cref{fig:errors_001} shows the time evolution of relative errors of two invariants---the Hamiltonian $H$ from \eqref{eq:H} and the angular momentum $\ell$ from \eqref{eq:ell}---as well as the deviation $\norm{\bm{P}(t)}$ from $\mathcal{K}$ (see \eqref{eq:P}) for $0 \le t \le 100$ with the above initial conditions, using $\textsf{RK4}$, $\textsf{Split2}$, $\textsf{Split4}$, and $\textsf{Split6Y}$.
We set the initial values of the invariants as
\begin{equation*}
  H_{0} \defeq H(\br(0), \bp(0)),
  \qquad
  \ell_{0} \defeq \ell(\br(0), \bp(0)).
\end{equation*}

\begin{figure}[htbp]
  \centering
  \begin{subcaptionblock}{\linewidth}
    \centering
    \includegraphics[width=\linewidth]{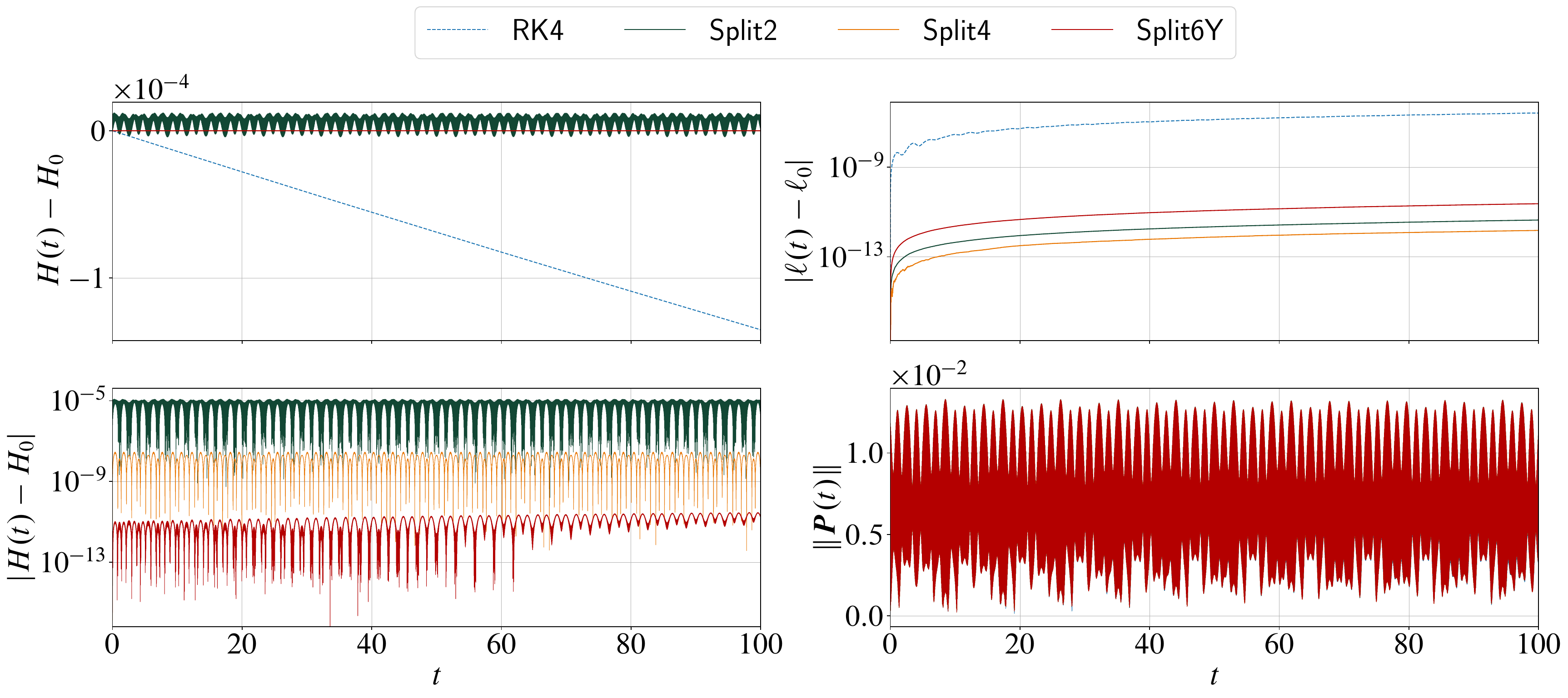}
    \caption{Initial condition~\eqref{eq:dipole_IC_in_K} with $\bz(0) \in \mathcal{K}$.}
    \label{fig:errors_inK_001}
  \end{subcaptionblock}
  \vspace{1ex}\\
  \begin{subcaptionblock}{\linewidth}
    \centering
    \includegraphics[width=\linewidth]{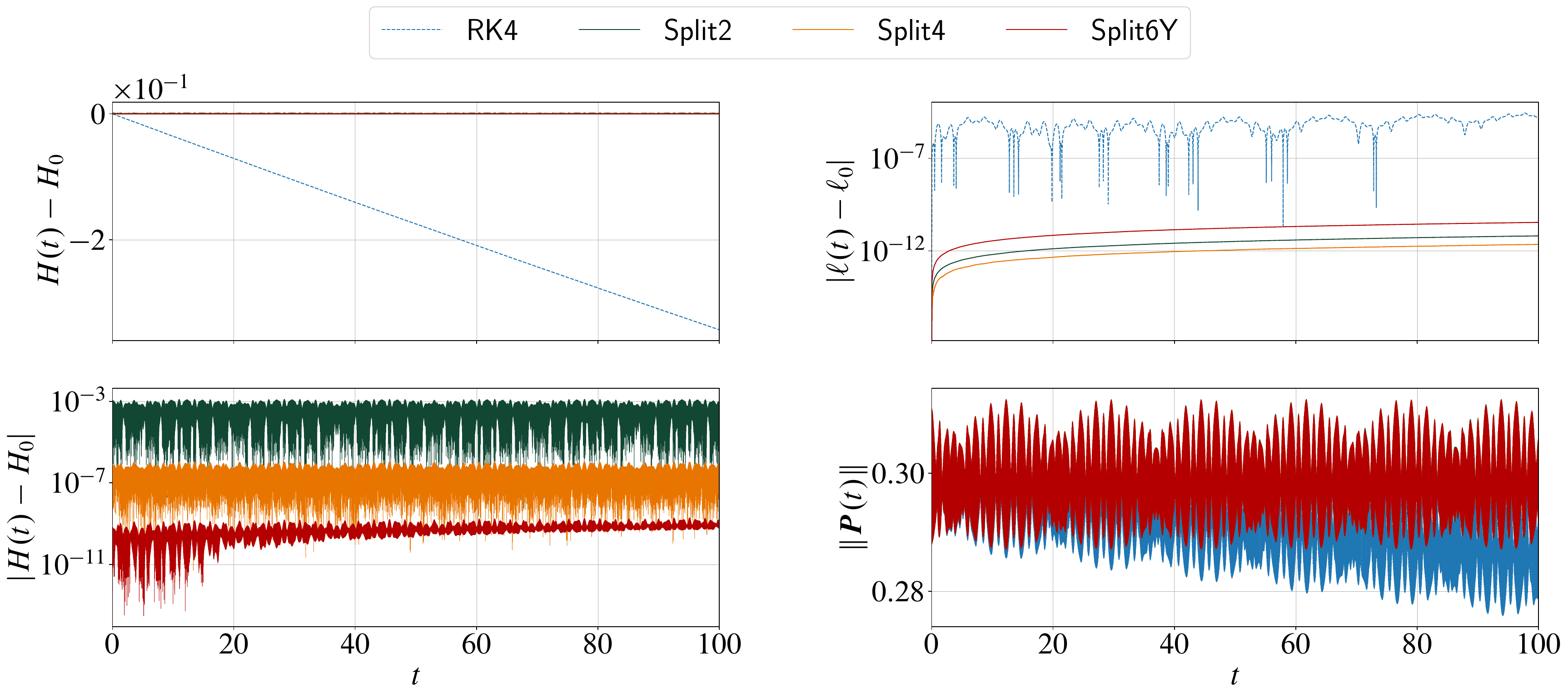}
    \caption{Initial condition~\eqref{eq:dipole_IC_off_K} with $\bz(0) \notin \mathcal{K}$.}
    \label{fig:errors_offK_001}
  \end{subcaptionblock}
  \caption{Time evolution of relative errors in Hamiltonian $H$, angular momentum $\ell$, and deviation $\norm{\bm{P}(t)}$ from $\mathcal{K}$; $N = 2$, $q_{1} = -1$, $q_{2} = 1$, $\eps = 0.01$, and $\dt = 10^{-3}$. The plots of the deviation $\norm{\bm{P}(t)}$ (not an invariant of the system) for schemes other than \textsf{Split6Y} are indistinguishable.}
  \label{fig:errors_001}
\end{figure}

For the former case with $\bz(0) \in \mathcal{K}$, one observes drifts in both $H$ and $\ell$ for the \textsf{RK4} solution.
On the other hand, the Hamiltonian for all the splitting integrators exhibit only small fluctuations near $H_{0}$ without any drifts, just as observed in \Cref{fig:phaseportrait_001}.
Recall from \Cref{prop:fundamental_properties} and \Cref{cor:fundamental_properties} that the splitting integrators preserve $\ell$ exactly.
One can see that the errors in $\ell$ for the splitting integrators are indeed negligibly small compared to that for \textsf{RK4}.

We note that the the plots of the deviation $\norm{\bm{P}(t)}$ (not an invariant of the system) for different schemes are indistinguishable because it is in the scale of $\max\{ \eps, \norm{\bm{P}(0)} \}$ ($\norm{\bm{P}(0)} = 0$ in panel \subref{fig:errors_inK_001} and $\norm{\bm{P}(0)} \simeq 0.29$ in \subref{fig:errors_offK_001}).
Compare the plot on the lower-right in the panel~\subref{fig:errors_inK_001} with \Cref{fig:t-normP_inK}: the latter is only for \textsf{Split2}, and is virtual identical to the former, in which only the \textsf{Split6Y} result is visible.

For the case with $\bz(0) \notin \mathcal{K}$, one sees that the drift in $H$ for \textsf{RK4} is significantly greater than the former case: the relative error grows to the order of $10^{-1}$ before $t = 100$ (in contrast to $10^{-4}$ in the former case).
The relative errors in $H$ for the splitting integrators have grown roughly by the multiplicative factor of $10^{2}$ in comparison to the former case.
This again confirms our prediction using the modified Hamiltonian that the error in $H$ for $\bz(0) \notin \mathcal{K}$ is greater than that with $\bz(0) \in \mathcal{K}$ by the factor of $1/\eps$, given that $\eps = 10^{-2}$ here.
However, notice that the relative errors still remain quite small compared to $10^{-1}$.
In particular, \textsf{Split4}---4th-order method just like \textsf{RK4}---maintains relative errors in the scale of $10^{-7}$.

\section{Massive Vortex Necklace ($N = 5$)}
\subsection{Pentagonal Necklace Solution}
As an additional example, we shall consider the massive vortex necklace solutions considered in \cite{PhysRevResearch.5.023109,BePe2025}, in particular the case with $N = 5$ here.
As the initial condition, we consider massive point vortices located at the vertices of the pentagon:
\begin{equation}
  \label{eq:r-Necklace}
  \br_{j}(0) = \rho ( \cos\theta_{j}, \sin\theta_{j} )
  \quad
  \text{with}
  \quad
  \theta_{j} \defeq \frac{2(j-1)}{N} \pi
  \quad
  \text{for}
  \quad
  1 \le j \le N,
\end{equation}
where $\rho \in (0,1)$ is the radius of the circle inscribed by the pentagon.
Let $\omega \in \R$, and suppose one sets the initial momenta $\bp_{j}(0)$ to be $(q_{j} - \eps\,\omega) J \br_{j}(0)$ so that, in view of \eqref{eq:p}, the initial velocities $\dot{\br}_{j}(0)$ become $-\omega J \br_{j} = \omega (\ez \times \br_{j})$.
One can then show that the vortices exhibit a rigid counterclockwise rotation with angular velocity $\omega$ about the origin assuming that $q_{j} = 1$ for $1 \le j \le N$ and also if the quadratic equation
\begin{equation}
  \label{eq:omega}
  \eps\,\rho^{2}(1 - \rho^{10}) \omega^{2} - 2\rho^{2}(1 - \rho^{10}) \omega + 6\rho^{10} + 4 = 0
\end{equation}
for $\omega$ admits real solutions; see \cite{PhysRevResearch.5.023109} for a more general formula for the $N$-vortex necklace solution and its details.

\subsection{Perturbed Pentagonal Necklace Solutions}
Since the rigid rotation is not particularly interesting for a numerical experiment, we shall add some perturbation to the initial momenta as follows:
\begin{equation}
  \label{eq:p-Necklace}
  \bp_{j}(0) = (q_{j} - \eps\,\omega) J \br_{j}(0) + \frac{1}{10} \br_{j}(0)
  \quad
  \text{for}
  \quad
  1 \le j \le N,
\end{equation}
where $\br_{j}(0)$ are given in \eqref{eq:r-Necklace}; the perturbation proportional to $\br_{j}(0)$ gives a ``kick'' to push the solutions off the circular trajectories of the rigid rotation.
We note that, for the initial condition $\bz(0) = (\br(0),\bp(0))$ with \eqref{eq:r-Necklace} and \eqref{eq:p-Necklace}, one has $\norm{\bm{P}(0)} \simeq 0.16$ and so $\bz(0) \notin \mathcal{K}$.

\Cref{fig:x-y_necklace} shows the trajectories of the vortices computed using \textsf{Split6Y} for $0 \le t \le 0.3$ with $q_{j} = 1$ for $1 \le j \le 5$, $\eps = 0.01$, $\rho = 0.5$; the angular velocity $\omega \simeq 8.37$ is obtained as one of the solutions to \eqref{eq:omega} for the given parameters.
\begin{figure}[htbp]
  \centering
  \includegraphics[width=.9\textwidth]{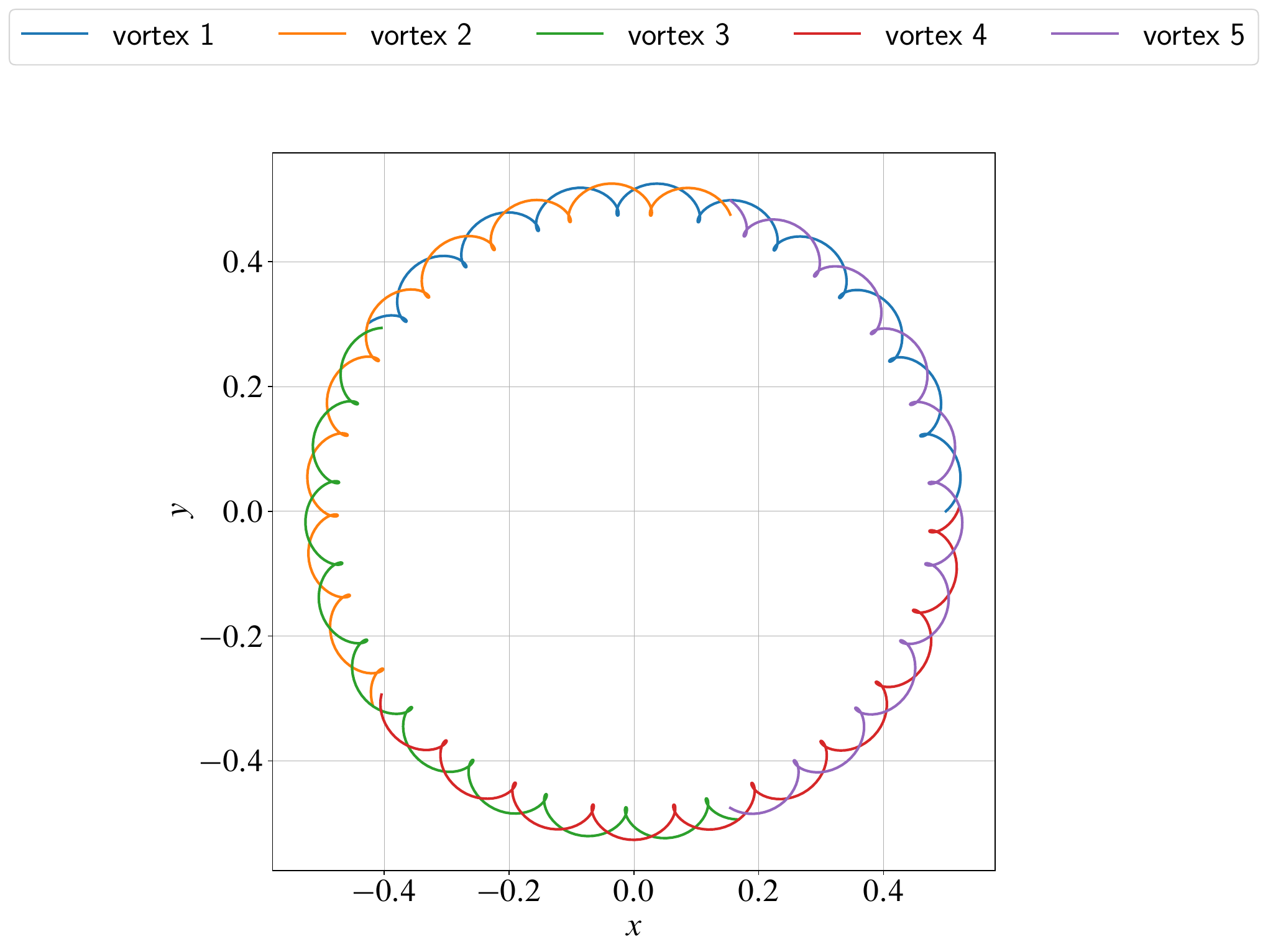}
  \caption{
    Trajectories ($0 \le t \le 0.3$) of five massive vortices with initial condition $\bz(0) = (\br(0),\bp(0))$ given by \eqref{eq:r-Necklace} and \eqref{eq:p-Necklace}; the parameters are $q_{j} = 1$ for $1 \le j \le 5$, $\eps = 0.01$, $\rho = 0.5$, and $\omega \simeq 8.37$ satisfying \eqref{eq:omega}.
    The solutions are computed with \textsf{Split6Y} with $\dt = 10^{-3}$.
    The initial positions are in the vertices of a pentagon, and the momenta are slightly perturbed from those for the necklace (rigid rotation) solutions, resulting in spiral motions 
  }
  \label{fig:x-y_necklace}
\end{figure}

\Cref{fig:errors_necklace} shows the time evolution of relative errors of two invariants---the Hamiltonian $H$ from \eqref{eq:H} and the angular momentum $\ell$ from \eqref{eq:ell}---as well as the deviation $\norm{\bm{P}(t)}$ from $\mathcal{K}$ for the pentagonal necklace problem from \Cref{fig:x-y_necklace}.
Recall that, in \Cref{fig:errors_offK_001}, we had $\bz(0) \notin \mathcal{K}$ as in this case.
One notices that the errors in the Hamiltonian are roughly in the same scale as in those \Cref{fig:errors_offK_001}, numerically demonstrating that our estimate of errors in the Hamiltonian for the case $\bz(0) \notin \mathcal{K}$ is valid in this higher-dimensional case.
\begin{figure}[htbp]
  \centering
  \includegraphics[width=\textwidth]{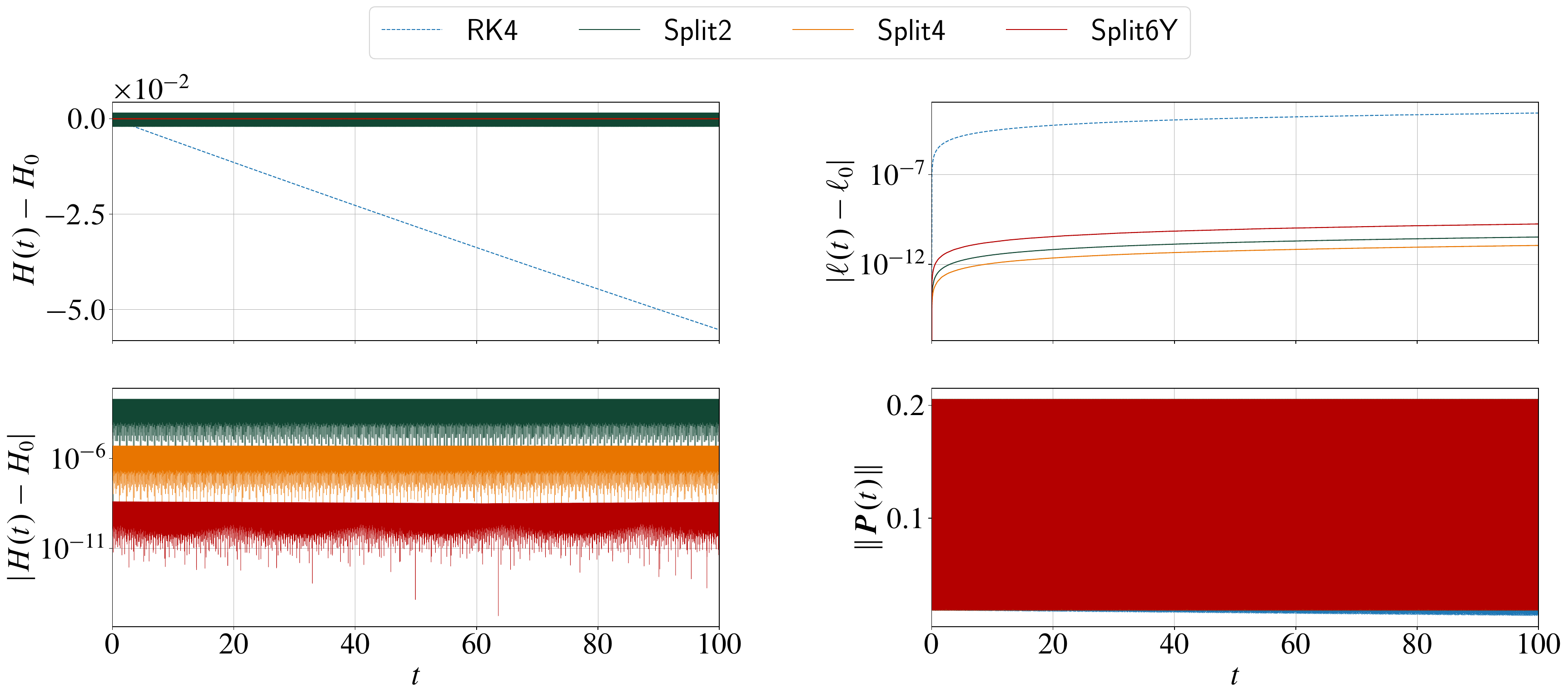}
  \caption{Time evolution of relative errors in Hamiltonian $H$, angular momentum $\ell$, and deviation $\norm{\bm{P}(t)}$ from $\mathcal{K}$ for the pentagonal necklace problem from \Cref{fig:x-y_necklace}.}
  \label{fig:errors_necklace}
\end{figure}

\section*{Summary and Outlook}
We have developed explicit integrators for the Hamiltonian dynamics~\eqref{eq:Hamilton} of massive point vortices that preserve the symplectic structure~\eqref{eq:Omega} and the angular momentum~\eqref{eq:ell} exactly, as well as nearly preserve the Hamiltonian~\eqref{eq:H} without drift.
Thanks to the preservation of these key invariants, the solutions exhibit excellent long-time accuracies compared to the Runge--Kutta method.
In particular, in the small-mass regime $\eps \ll 1$ of our interest here, the difference in accuracy is pronounced when the solutions become highly oscillatory.

Such a long-time accuracy and preservation of invariants are particularly important in numerically analyzing the stability of the massive vortices.
Given a recent interest in analyzing the stability of massive point vortices~\cite{PhysRevE.111.034216}, those symplectic integrators for massive point vortex dynamics in BEC with long-time accuracy will play an important role in numerically predicting the stability of massive vortices.

It is interesting to consider an extension of our integrators to other models of massive vortex dynamics, such as those presented in \cite{BeRiPe2023,PhysRevA.106.063307,PhysRevE.111.034216}, which seem to improve upon the model \eqref{eq:Hamilton} considered here.

\section*{Acknowledgments}
This work was supported by NSF grant DMS-2006736.
I would like to thank Andrea Richaud for introducing me to the subject of massive point vortices.

\bibliography{Massive_Vortex_Integrator}
\bibliographystyle{elsarticle-num}

\end{document}